\documentclass[11pt]{article}
\usepackage{times,amsfonts,amssymb,amsmath,amsthm,epsfig,graphicx}
\usepackage{times} \usepackage{multirow,xspace}
\usepackage{subfigure} %\usepackage{llncs}
\usepackage{setspace} %\usepackage[small,compact]{titlesec}

\usepackage[marginpar]{todo} \usepackage{hyperref,url} \usepackage[margin=1in]{geometry}

% document-specific packages

\usepackage{graphicx}

% handy theorems

\newtheorem{theorem}{Theorem}[section]

\newtheorem{claim}{Claim}[section]

\newtheorem{corollary}[theorem]{Corollary}

\newtheorem{lemma}[theorem]{Lemma}

\newtheorem{observation}[theorem]{Observation}

% handy commands

  \newcommand{\abs}[1]{\left| #1 \right|}

% document-specific commands

\newcommand{\packing}[2]{\mathcal{#1}_{#2}}
\newcommand{\ratio}[1]{\rho_{#1}}
\newcommand{\mar}[1]{\delta_{#1}}

\date{}

\renewcommand{\labelitemi}{}

\begin{document}

%\vspace{-2.7cm}

\title{Parametric Packing of Selfish Items and the Subset Sum Algorithm}

\author{
Leah Epstein\thanks{ Department of Mathematics, University of
Haifa, 31905 Haifa, Israel. {\tt lea@math.haifa.ac.il}. }\and
Elena Kleiman \thanks{ Department of Mathematics, University of
Haifa, 31905 Haifa, Israel. {\tt elena.kleiman@gmail.com}. }
\and
Juli\'{a}n Mestre \thanks{ Max-Planck-Institut f\"ur Informatik, 66123
  Saarbr\"ucken, Germany.
{\tt jmestre@mpi-inf.mpg.de.} Research supported by an Alexander von Humboldt Fellowship.}
}\maketitle

\vspace{-0.7cm}

{\tiny
\begin{abstract}
  The subset sum algorithm is a natural heuristic for the classical Bin
  Packing problem: In each iteration, the algorithm finds among the unpacked
  items, a maximum size set of items that fits into a new bin. More than 35
  years after its first mention in the literature, establishing the worst-case
  performance of this heuristic remains, surprisingly, an open problem.

  Due to their simplicity and intuitive appeal, greedy algorithms are the
  heuristics of choice of many practitioners. Therefore, better understanding
  simple greedy heuristics is, in general, an interesting topic in its own
  right. Very recently, Epstein and Kleiman \emph{(Proc. ESA 2008, pages
    368-380)} provided another incentive to study the subset sum algorithm by
  showing that the Strong Price of Anarchy of the game theoretic version of
  the bin-packing problem is \emph{precisely} the approximation ratio of this
  heuristic.

  In this paper we establish the exact approximation ratio of the subset sum
  algorithm, thus settling a long standing open problem. We generalize this
  result to the parametric variant of the bin packing problem where item sizes
  lie on the interval $(0, \alpha]$ for some $\alpha \leq 1$, yielding
  tight bounds for the Strong Price of Anarchy for all $\alpha \leq
  1$. Finally, we study the pure Price of Anarchy of the parametric Bin
  Packing game for which we show nearly tight upper and lower bounds for all
  $\alpha \leq 1$.

% The subset sum algorithm is the most natural heuristic for the
% classical bin packing problem. At each step, it tries to find
% among the unpacked items, a maximum size set that fits into a
% bin. This greedy algorithm performs well in practice, and in
% addition, the Strong Price of Anarchy of the Bin Packing game was
% shown to be equal to its worst case approximation ratio. In this
% paper, we solve a long standing open problem and find the exact
% approximation ratio of this heuristic. We generalize this result
% for the parametric variant of bin packing, where item sizes are
% all in an interval $(0,\alpha]$, where $\alpha<1$. Finally, we
% study the pure Price of Anarchy for the parametric variant and
% show nearly tight upper bounds and lower bounds on it for any
% $\alpha<1$.
\end{abstract}}

\section{Introduction}

%\subsection{Motivation and framework}

\noindent{\bf Motivation and framework.} The emergence of the Internet and its
rapidly gained status as the predominant communication platform has brought up
to the surface new algorithmic challenges that arise from the interaction of
the multiple self-interested entities that manage and use the network.  Due to
the nature of the Internet, these interactions are characterized by the
(sometimes complete) lack of coordination between those entities. Algorithm
and network designers are interested in analyzing the outcomes of these
interactions. An interesting and topical question is how much performance is
lost due to the selfishness and unwillingness of network participants to
cooperate. A formal framework for studying interactions between multiple
rational participants is provided by the discipline of Game Theory.  This is
achieved by modeling the network problems as strategic games, and considering
the quality of the Nash equilibria of these games. In this paper we consider
\textit{pure} Nash equilibria and strong equilibria. These equilibria are the
result of the pure strategies of the participants of the game, where they
choose to play an action in a deterministic, non-aleatory manner.

The algorithmic problems that are usually studied from a game theoretic point
of view are abstractions of real world problems, typically dealing with basic
issues in networks. In this paper, we consider game theoretic variants of the
well-known Bin Packing problem  and its parametric version; see
\cite{CoGaJo97, G_CC_06,G_CCL_1_06} for surveys on these problems.

In the classic Bin Packing problem, we are given a set of items $I=\{ 1,2,
\ldots ,n\}$. The $i$th item in $I$ has size $s_i \in (0,1]$. The objective is
to pack the items into unit capacity bins so as to minimize the number of bins
used.  In the parametric case, the sizes of items are bounded from above by a
given value. More precisely, given a parameter $\alpha\leq 1 $ we consider
inputs in which the item sizes are taken from the interval $(0,
\alpha]$. Setting $\alpha$ to 1 gives us the standard Bin Packing problem.

As discussed in \cite{EpsteinK08}, bin packing is met in a great variety of
networking problems, such as the problem of packing a given set of packets
into a minimum number of time slots for fairness provisioning and the problem
of packing data for Internet phone calls into ATM packets, filling fixed-size
frames to maximize the amount of data that they carry. This fact motivates the
study of Bin Packing from a game theoretic perspective.  The Parametric Bin
Packing problem also models the problem of efficient routing in
networks that consist of parallel links of same bounded bandwidth between two
terminal nodes---similar to the ones considered in \cite{KP99, Bilo06,
  EpsteinK08}.  As Internet Service Providers often impose a policy which
restricts the amount of data that can be downloaded/uploaded by each user,
placing a restriction on the size of the items allowed to transfer makes the
model more realistic.

%\subsection{The model}

\noindent{\bf The model.} In this paper we study the Parametric Bin Packing
problem both in cooperative and non-cooperative versions. In each case the
problem is specified by a given parameter $\alpha$. The Parametric Bin Packing
game is defined by a tuple $BP(\alpha)=\langle N, (B_i)_{i\in N}, (c_i)_{i\in
  N}\rangle$. Where $N$ is the set of the items, whose size is at most
$\alpha$. Each item is associated with a selfish player---we sometimes
consider the items themselves to be the players.  The set of strategies $B_i$
for each player $i\in N$ is the set of all bins. Each item can be assigned to
one bin only. The outcome of the game is a particular assignment
$b=(b_j)_{j\in N}\in \times_{j\in N} B_j$ of items to bins. All the bins have
unit cost. The cost function $c_i$ of player $i\in N$ is defined as follows. A
player pays $\infty$ if it requests to be packed in an invalid way, that is, a
bin which is occupied by a total size of items which exceeds 1. Otherwise, the
set of players whose items are packed into a common bin share its unit cost
proportionally to their sizes.  That is, if an item $i$ of size $s_i$ is
packed into a bin which contains the set of items $B$ then $i$'s payment is
$c_i=s_i / \sum_{k \in B} s_k$. Notice that since $ \sum_{k \in B} s_k \leq 1
$ the cost $c_i$ is always greater or equal than $s_i$.  The social cost
function that we want to minimize is the number of used bins.

Clearly, a selfish item prefers to be packed into a bin which is as full as
possible. In the non-cooperative version, an item will perform an improving
step if there is a strictly more loaded bin in which it fits. At a Nash
equilibrium, no item can unilaterally reduce its cost by moving to a different
bin. We call a packing that admits the Nash conditions $NE$ packing.  We
denote the set of the Nash equilibria of an instance of the Parametric Bin
Packing game $G\in BP(\alpha)$ by $NE(G)$.

In the cooperative version of the Parametric Bin Packing game, we consider all
(non-empty) subgroups of items from $N$. The cost functions of the players are
defined the same as in the non-cooperative case. Each group of items is
interested to be packed in a way so as to minimize the costs for all group
members.  Thus, given a particular assignment, all members of a group will
perform a joint improving step (not necessarily into a same bin) if there is
an assignment in which, for each member, the new bin will admit a strictly
greater load than the bin of origin. The costs of the non-members may be
enlarged as a result of this improving step.  At a strong Nash equilibrium, no
group of items can reduce the costs of all its members by moving to different
bins. We denote the set of the strong Nash equilibria of an instance $G$ of
the Parametric Bin Packing game by $SNE(G)$. As a group can contain a single
item, $SNE(G)\subseteq NE(G)$ holds.

To measure the extent of deterioration in the quality of Nash packing due to
the effect of selfish and uncoordinated behavior of the players (items) in the
worst-case we use the Price of Anarchy (\textit{PoA}) and the Price of
Stability (\textit{PoS}). These are the standard measures of the quality of
the equilibria reached in uncoordinated selfish setting \cite{KP99,
  Rough}. The $PoA/ PoS$ of an instance $G$ of the Parametric Bin Packing game
are defined to be the ratio between the social cost of the worst/best Nash
equilibrium and the social optimum, respectively.  As packing problems are
usually studied via asymptotic measures, we consider asymptotic \emph{PoA} and
\emph{PoS} of the Parametric Bin Packing game \textit{BP}$(\alpha)$, that are
defined by taking a supremum over the \emph{PoA} and \emph{PoS} of all
instances of the Parametric Bin Packing game, for large sets $N$.

Recent research \cite{AFM07, FKLO07} initiated a study of measures
that separate the effect of the lack of coordination between players from the
effect of their selfishness. The measures considered are the Strong Price of
Anarchy (\textit{SPoA}) and the Strong Price of Stability
(\textit{SPoS}). These measures are defined similarly to the \emph{PoA} and
the \emph{PoS}, but only strong equilibria are considered.

These measures are well defined only when the sets $NE(G)$ and $SNE(G)$ are
not empty for any $G\in BP(\alpha)$.  Even though pure Nash equilibria are no
guaranteed to exist for general games, they always exist for the Bin Packing
game: The existence of pure Nash equilibria was proved in \cite{Bilo06} and
the existence of strong Nash equilibria was proved in \cite{EpsteinK08}.

As we study the $SPoA/SPoS$ measures in terms of the worst-case approximation
ratio of a greedy algorithm for bin packing, we define here the
\textit{parametric worst-case ratio} $R^\infty_A(\alpha)$ of algorithm $A$ by
\[ R^\infty_A(\alpha)=\lim_{k\rightarrow \infty} \sup_{I\in V_{\alpha}}
\bigg\langle \frac{A(I)}{OPT(I)}\ \bigg|\ OPT(I)=k \bigg\rangle,\]
where $A(I)$ denotes the number of bins used by algorithm $A$ to pack the set
$I$, $OPT(I)$ denotes the number of bins used in the optimal packing of $I$
and $V_{\alpha}$ is the set of all list $I$ for which the maximum size of the
items is bounded from above by $\alpha$. In this paper we use an equivalent
definition, where $R^\infty_A(\alpha)$ is defined as the smallest number such
that there exists a constant $K\geq 0$ for which $\displaystyle A(I)\leq
R^\infty_A(\alpha)\cdot OPT(I)+K$, for every list $I\in V_{\alpha}$.

%\subsection{Related work and our results}

\noindent{\bf Related work.} The first problems that were studied from game
theoretic point of view were job scheduling \cite{KP99, CV07, MS01} and
routing \cite{Rough, RoughgardenT00} problems. Since then, many other problems
have been considered in this setting.

The classic bin packing problem was introduced in the early 70's
\cite{Ullman71,JohnsonDUGG74}.  This problem and its variants are often met in
various real-life applications, and it has a special place in theoretical
computer science, as one of the first problems to which approximation
algorithms were suggested and analyzed with comparison to the optimal
algorithm. Bil\`{o} \cite{Bilo06} was the first to study the Bin Packing
problem from a game theoretic perspective. He proved that the Bin Packing game
admits a pure Nash equilibrium and provided non-tight bounds on the Price of
Anarchy. He also proved that the bin packing game converges to a pure Nash
equilibrium in a finite sequence of selfish improving steps, starting from any
initial configuration of the items; however, the number of steps may be
exponential. The quality of pure equilibria was further investigated by
Epstein and Kleiman \cite{EpsteinK08}. They proved that the Price of Stability
of the Bin Packing game equals to 1, and showed almost tight bounds for the
\textit{PoA}; namely, an upper bound of 1.6428 and a lower bound of
1.6416. Interestingly, this implies that the Price of Anarchy is not equal to
the approximation ratio of any natural algorithm for bin packing. Yu and Zhang
\cite{YuZ08} later designed a polynomial time algorithm to compute a packing
that is a pure Nash equilibrium. Finally, the \textit{SPoA} was analyzed in
\cite{EpsteinK08}.

A natural algorithm for the Bin Packing problem is the \textit{Subset Sum}
algorithm (or SS algorithm for short). In each iteration, the algorithm finds
among the unpacked items, a maximum size set of items that fits into a new
bin. The first mention of the \textit{Subset Sum} algorithm in the
literature is by Graham \cite{Graham72} who showed that its worst-case
approximation ratio $R^\infty_{SS}$ is at least $\sum_{i=1}^\infty
\frac{1}{2^i - 1} \approx 1.6067$. He also conjectured that this was indeed
the true approximation ratio of this algorithm. The SS algorithm can be
regarded as a refinement of the First-Fit algorithm \cite{JohnsonDUGG74},
whose approximation ratio is known to be $1.7$. Caprara and Pferschy
\cite{CapraraP04} gave the first non-trivial bound on the worst-case
performance of the SS algorithm, by showing that $R_{SS}^\infty(1)$ is at most
$\frac{4}{3}+\ln \frac {4}{3}\approx 1.6210$.  They also generalized their
results to the parametric case, giving lower and upper bounds on
$R_{SS}^\infty(\alpha)$ for $\alpha < 1$.

% They have also generalized their results for the parametric case (i.e. bounded
% item size), showing that, if $\frac{1}{t+1}<\alpha\leq \frac{1}{t}$ then $
% R_{SS}^\infty (\alpha )\leq \sum_{i=1}^\infty {\frac{t}{(t+1)^i-1}}$ for
% $t\geq 1$, and
% \begin{displaymath}
% R_{SS}^\infty (\alpha ) \leq  \left\{ \begin{array}{ll}
% 2-\frac{4t}{3(t+1)}+\ln \frac{4}{3} & \textrm{if $t\leq 2$}\\
% 1+\ln \frac{t+1}{t} & \textrm{if $t\geq 3$}
% \end{array} \right.
% \end{displaymath}
% However, these bounds are not tight for any $t\geq 1$, and the problem of finding the exact approximation ratio of the SS algorithm was not completely settled.

Surprisingly, the approximation ratio of the \textit{Subset Sum} is deeply
related to the Strong Price of Anarchy of the Bin Packing game. Indeed,
the two concepts are equivalent \cite{EpsteinK08}: Every output of the SS
algorithm is a strong Nash equilibrium, and every strong Nash equilibrium is
the output of some execution of the SS algorithm. Epstein and Kleiman
\cite{EpsteinK08} used this fact to show the existence of strong equilibria
for the Bin Packing game and to characterize the \textit{SPoA/SPoS} in terms
of this approximation ratio.

\noindent{\bf Our results.}
In this paper, we fully resolve the long standing open problem of finding the
exact approximation ratio of the Subset Sum algorithm, proving Graham's
conjecture to be true. This in turn implies a tight bound on the Strong Price
of Anarchy of the Bin-Packing game. Then we extend this result to the parametric
variant of bin packing where item sizes are all in an interval $(0,\alpha]$
for some $\alpha<1$. Interestingly, the ratio $R_{SS}^\infty(\alpha)$ lies
strictly between the upper and lower bounds of Caprara and Pferschy
\cite{CapraraP04} for all $\alpha \leq \frac12$. Finally, we study the pure
Price of Anarchy for the parametric variant and show nearly tight upper bounds
and lower bounds on it for any $\alpha<1$. The tight bound of 1 on the Price
of Stability proved in \cite{EpsteinK08} for the general unrestricted Bin
Packing game trivially carries over to the parametric case.

The main analytical tool we use to derive the claimed upper bounds is
\textit{weighting functions}---a technique widely used for the analysis of
algorithms for various packing problems \cite{Ullman71, JohnsonDUGG74,
  LeeLee85} and other greedy heuristics
\cite{conf/stacs/ImmorlicaMM05,journals/jacm/JainMMSV03}. The idea of such
weights is simple. Each item receives a weight according to its size
and its assignment in some fixed \textit{NE} packing. The weights are assigned
in a way that the cost of the packing (the number of the bins used) is close
to the total sum of weights. In order to complete the analysis, it is usually
necessary to bound the total weight that can be packed into a single bin of an
optimal solution.

Due to lack of space some of our proofs appear in the Appendix.

\section{Tight worst-case analysis of the Subset Sum algorithm}

In this section we prove tight bounds for the worst-case performance ratio of
the \textit{Subset Sum} (SS) algorithm for any $\alpha$. It was proved in
\cite{EpsteinK08} that the strong equilibria coincide with the packings
produced by the SS algorithm for Bin Packing.  The equivalence for the
\textit{SPoA}, \textit{SPoS} and the worst-case performance ratio of the
\textit{Subset Sum} algorithm which was also proved in \cite{EpsteinK08} still
applies for the Parametric Bin Packing game; indeed, it holds for all possible
lists of items (players), and in particular to lists where all items have size
at most $\alpha$. This allows us to characterize the \textit{SPoA/SPoS} in
terms of $R_{SS}^\infty(\alpha)$.

First we focus on the unrestricted case, that is, $\alpha =1$. Let
$\packing{B}{I}$ be the set of bins used by our algorithm and $\packing{O}{I}$
be the optimal packing for some instance $I$. We are interested in the
asymptotic worst-case performance of SS; namely, we want to identify constants
$\ratio{SS}$ and $\mar{SS}$ such that \begin{equation}
   \label{eq:goal}
   |\packing{B}{I}| \leq \ratio{SS}\, |\packing{O}{I}| + \mar{SS}.
\end{equation}

Using the weighting functions technique, we charge the ``cost" of the packing
to individual items and then show for each bin in $\packing{O}{I}$ that the
overall charge (weight) to items in the bin is not larger than $\ratio{SS}$.

Let $B \subseteq I$ be a bin in $\packing{B}{I}$. We use the
following short-hand notation $s(B) = \sum_{j \in B} s_j$ and
$\min(B) = \min_{j \in B} s_j$. Let $s_{\min}$ be the size of the
smallest yet-unpacked item just before opening $B$. For every $i
\in B$ we will charge item $i$ a share $w_i$ of the cost of
opening the bin, where
\begin{equation}
   \label{eq:share}
   w_i =
   \begin{cases}
     \frac{s_i}{s(B)} & \text{if } 1-s_{\min} \leq s(B), \\
     \ s_i & \text{otherwise.}
   \end{cases}
\end{equation}
These weights are very much related to the payments of selfish players (items)
in the Bin Packing game.

Let $w(B)$ denote the total weight of items in a bin $B$.
Note that if the size of items packed in $B$ is large enough
($s(B) \geq 1-s_{\min}$) then $w(B) =1$ and thus the charged
amount is enough to pay for $B$. Otherwise the charged amount only
pays for a $s(B)$ faction of the cost. Let $\hat{B}_1, \ldots,
\hat{B}_r$ be the bins that are underpaid listed in the order they
are opened by the algorithm and let $s_{\min}^i$ be the smallest
item available when $\hat{B}_i$ was opened. Notice that
$s_{\min}^i$ must belong to $\hat{B}_i$ otherwise we could safely
add the item to the bin. Also note that we cannot add
$s^{i+1}_{\min}$ to $s(\hat{B}_i)$, so we get
\begin{equation*}
s(\hat{B}_i) + s^{i+1}_{\min} > 1
  \Longrightarrow
  s^{i+1}_{\min} > s^i_{\min}.
\end{equation*}
Therefore, because of the definition of the SS heuristic, for all $i< r$, it must be case that swapping $s^i_{\min}$ with $s^{i+1}_{\min}$ in
$\hat{B}_i$ must yield a set that cannot be packed into a single bin, so we get \begin{equation*}
s(\hat{B}_i) - s^i_{\min} + s^{i+1}_{\min} > 1
  \Longrightarrow
  1-s(\hat{B}_i) < s^{i+1}_{\min} - s^i_{\min}.
 \end{equation*}

The total amount that is underpaid by all the $\hat{B}_i$ bins can be bounded as follows
\begin{equation*}
   \sum_{i=1}^r (1 - s(\hat{B}_i)) \leq \sum_{i=1}^{r-1} (s^{i+1}_{\min} -
   s^i_{\min}) + (1- s_{\min}^r) \leq 1.
\end{equation*} This amount will be absorbed by the additive constant term $\mar{SS}$ in our asymptotic bound
\eqref{eq:goal}.

Let $O$ be a set of items that can fit in a single bin, that is $s(O) \leq 1$, and denote with $s_1, s_2, \ldots, s_r$
the items contained in $O$, listed in \emph{reverse order} of how our algorithm packs them. Our goal is to show that
$\sum_{i \in O} w_i$ is not too big. To that end, we first establish some properties that these values must have
and then set up a mathematical program to find the sizes $s_1, \ldots, s_r$ obeying these properties and maximizing
$w(O)$. Consider the point in time when our algorithm packs $s_i$. Let $B$ be the bin the algorithm uses to pack
$s_i$ and let $O_i = \{1, \ldots, i\}$.

Because $O_i$ is a candidate bin for our algorithm we get $s(B) \geq s(O_i)$. Therefore, by \eqref{eq:share}, we have
\begin{equation}
   \label{eq:alpha-constraint-pack-O}
   w_i \leq \frac{s_i}{s(O_i)}.
\end{equation}
Notice that if $s(B) < 1 - \min(O_i)$ then $i$'s share is $s_i$.  Therefore, we always have \begin{equation}
   \label{eq:alpha-constraint-min-O}
   w_i \leq \frac{s_i}{1-\min(O_i)}.
\end{equation}

Our job now is to find sizes $s_1, \ldots, s_r$ maximizing $w(O)$ subject to \eqref{eq:alpha-constraint-pack-O}
and \eqref{eq:alpha-constraint-min-O}. Equivalently, we are to determine the value of the following mathematical
program

\begin{gather} \label{LP:factor} \tag{$\mathrm{MP}_r$}
  \text{maximize} \ \sum_{i=1}^r \frac{s_i}{\max \left\{ \sum_{j=1}^i s_j, 1-
      \min_{1\leq j \leq i} s_j \right\} } \\[-6ex] \notag
\end{gather} \hspace{1.5cm} subject to \\[-3ex] \begin{align}
  \sum_{i=1}^r s_i\ & \leq 1 \notag   \\
  \notag
  s_i  & \geq 0 & \forall\ i \in [r]
\end{align}

Let $\lambda_r$ be the value of \eqref{LP:factor} and let $\lambda = \sup_r \lambda_r$. The following theorem shows
that the worst-case approximation ratio of the SS algorithm is precisely $\lambda$.

\begin{theorem}
  \label{thm:worst-case}
  For every instance $I$, we have $|\packing{B}{I}| \leq \lambda\,
  |\packing{O}{I}| + 1$. Furthermore, for every~$\delta > 0$, there exists an
  instance $I$ such that $|\packing{B}{I}| \geq \left( \lambda - \delta
  \right) \, |\packing{O}{I}|$.
\end{theorem}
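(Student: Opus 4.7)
The proof has two halves corresponding to an upper bound and a matching lower bound, with the mathematical program $(\mathrm{MP}_r)$ serving as the bridge between SS and its worst case. For the upper bound, the plan is to combine the underpayment estimate derived in the excerpt with a per-optimal-bin weight bound coming directly from $(\mathrm{MP}_r)$. The underpayment argument already gives $|\packing{B}{I}| \le \sum_i w_i + 1$, because every well-paid bin is fully covered by its weights and the total shortfall across $\hat{B}_1,\dots,\hat{B}_r$ was shown to be at most one. To bound the right-hand side I would partition the items according to $\packing{O}{I}$ and, for each optimal bin $O$, list its items in reverse order of SS packing as $s_1,\dots,s_r$ with $O_i=\{1,\dots,i\}$. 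Inequalities \eqref{eq:alpha-constraint-pack-O} and \eqref{eq:alpha-constraint-min-O} say precisely that these sizes form a feasible solution of $(\mathrm{MP}_r)$ whose objective value is at least $w(O)$, so $w(O)\le\lambda_r\le\lambda$; summing over $O$ gives $\sum_i w_i \le \lambda\,|\packing{O}{I}|$, yielding \eqref{eq:goal} with $\ratio{SS}=\lambda$ and $\mar{SS}=1$.

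For the matching lower bound, fix $\delta>0$. Since $\lambda$ is a supremum, there exist $r$ and a feasible point $(s_1^*,\dots,s_r^*)$ of $(\mathrm{MP}_r)$ whose objective exceeds $\lambda-\delta/2$. I would construct, for each large $N$, an instance $I_N$ consisting of $N$ copies of each $s_i^*$, with tiny carefully-signed perturbations used to break ties. Placing one copy of each class in each of $N$ bins is a legal packing, so $|\packing{O}{I_N}|\le N$. For SS, I would argue inductively, from the largest class $s_r^*$ down to $s_1^*$, that upon reaching class $s_i^*$ the algorithm opens approximately $N\cdot s_i^*/\max\{s(O_i^*),\,1-\min(O_i^*)\}$ fresh bins; the key identity is that this matches exactly the $i$-th term of the $(\mathrm{MP}_r)$ objective. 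Summing across $i$ then yields $|\packing{B}{I_N}|\ge(\lambda-\delta)\,N$ for $N$ large enough, which is the required lower bound.

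The delicate step, and the main obstacle, is forcing SS to realise the intended packing in the lower bound. Since SS is only required to pick \emph{some} maximum-size subset, I need to engineer one execution that follows the pattern prescribed by $(\mathrm{MP}_r)$. The plan is to perturb each class by a tiny, carefully-signed $\varepsilon$ so that at each step the unique max-sum subset contained in the residual instance consists of items of a single class and fills its bin in the shape dictated by the currently tight constraint---either saturating $1-\min(O_i^*)$ or exactly matching $s(O_i^*)$. Chaining this across classes requires checking that the items left unpacked at each transition really are the items appearing in the ``next'' MP subproblem; this is the step most prone to subtle errors and where a careful case analysis based on which of the two MP constraints is active will be needed. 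A secondary bookkeeping issue is ensuring the perturbations keep $\sum_i s_i^* \le 1$ so that the optimal $N$-bin packing remains valid, which is handled by signing the $\varepsilon$'s downward where necessary.
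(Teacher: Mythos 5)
Your upper bound is essentially the paper's own argument: the underpayment estimate gives the additive $+1$, the items of each optimal bin (in reverse SS order) satisfy \eqref{eq:alpha-constraint-pack-O} and \eqref{eq:alpha-constraint-min-O} and hence form a feasible point of \eqref{LP:factor} with objective at least $w(O)$, so $w(O)\le\lambda_{|O|}\le\lambda$, and summing over $\packing{O}{I}$ finishes. That half is fine.

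The lower bound, however, has a genuine gap, and it sits exactly at the step you flag as ``delicate.'' You propose to take an \emph{arbitrary} feasible point of \eqref{LP:factor} with objective value exceeding $\lambda-\delta/2$ and to force SS, via signed perturbations, to open roughly $N\,s_i^*/\max\{s(O_i^*),\,1-\min(O_i^*)\}$ bins per class. This cannot work for a general feasible point: the program only \emph{upper-bounds} the weight SS can charge to an optimal bin, and most feasible points are not realizable as SS executions. Concretely, take $r=3$ with $s_1=s_2=0.3$, $s_3=0.4$; this is feasible with objective $\frac{0.3}{0.7}+\frac{0.3}{0.7}+\frac{0.4}{1}\approx 1.257$, yet on $N$ copies of each class SS simply packs one item of each size into a full bin and uses exactly $N=|\packing{O}{I}|$ bins. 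No tiny perturbation prevents this, since a mixed bin of total size $1$ beats every single-class subset. The point is that realizing the $i$-th objective term as an actual bin count requires each size class to fill bins \emph{on its own} to level exactly $\max\{s(O_i^*),1-\min(O_i^*)\}$, which is a special structural property of the optimizer, not of feasible points in general. The paper therefore first proves Lemma~\ref{lem:optimal-solution}, which identifies the optimal solution explicitly as $s_i^*=2^{-i}$ for $i<r$ and $s_r^*=2^{-r+1}$; for this geometric solution one has $\sum_{j\le i}2^{-j}=1-2^{-i}=1-\min(O_i^*)$, so $2^i-1$ copies of class $i$ fill a bin to precisely the right level, and the Graham-type instance ($N$ items of size $2^{-i}+\varepsilon$ for $i<r$ and $N$ items of size $2^{-r+1}-r\varepsilon$) makes SS produce $\lambda_r\,N$ bins against an optimum of $N$. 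Your proof needs to either invoke that explicit optimizer (i.e., prove the analogue of Lemma~\ref{lem:optimal-solution}) or otherwise restrict to feasible points with the self-filling structure; as written, the inductive claim about the number of fresh bins per class is false.
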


The necessary tools for proving the upper bound have been laid out above, we just need to put everything together:
\[\abs{\packing{B}{I}} \leq \sum_{B \in \packing{B}{I}} \sum_{i \in B} w_i + 1 = \sum_{O \in \packing{O}{I}}
\sum_{i \in O} w_i + 1 \leq \sum_{O \in
  \packing{O}{I}} \lambda_{\abs{O}} + 1 \leq \lambda \abs{\packing{O}{I}} + 1. \]

To be able to prove the claimed lower bound, we first need to study some
properties of \eqref{LP:factor}. The following lemma fully characterizes the
optimal solutions of \eqref{LP:factor}.

\begin{lemma}
  \label{lem:optimal-solution}
  The optimal solution to \eqref{LP:factor} is
  \begin{equation*}
     s^*_i =
     \begin{cases}
       2^{-i} & \text{if } i < r, \\
       2^{-r+1} & \text{if } i = r.
   \end{cases}
  \end{equation*}
\end{lemma}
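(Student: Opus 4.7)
The plan is to determine the structure of any optimum of \eqref{LP:factor} by a sequence of reductions, and then read off the explicit formula from the resulting recurrence. First I would establish that some optimum lists the sizes in non-increasing order, $s_1 \geq s_2 \geq \cdots \geq s_r$. The natural route is an adjacent swap argument: for $s_k < s_{k+1}$, interchanging these two sizes only affects the two terms indexed by $k$ and $k+1$, because the partial sum $\sum_{j \leq k+1} s_j$ and the running minimum $\min_{j\leq k+1} s_j$ are invariant under the swap, and the quantities appearing at indices $i<k$ or $i>k+1$ are untouched. A small case split on which component of $\max\{\sigma_i, 1-\mu_i\}$ is active at $i = k, k+1$, before and after the swap, shows that the objective weakly increases. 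Here I am using the shorthand $\sigma_i = \sum_{j \leq i} s_j$ and $\mu_i = \min_{j \leq i} s_j$.

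Once sorted, $\mu_i = s_i$ and the $i$-th denominator simplifies to $D_i = \max\{\sigma_i, 1-s_i\}$. The next step is to show that $\sigma_r = 1$ at the optimum: otherwise one can scale $s_r$ slightly upward and strictly improve the $r$-th term (its numerator grows while the denominator either grows by less than proportionally or shrinks) while leaving all earlier terms intact, since none of $\sigma_i, \mu_i$ for $i<r$ depends on $s_r$. Then I would argue the balance property that for each $i < r$, $\sigma_i = 1 - s_i$. If $\sigma_i > 1 - s_i$, I shift a little mass from $s_i$ to $s_r$; if $\sigma_i < 1 - s_i$, I shift in the opposite direction. In either case one checks that the local gain on the $i$-th term dominates both the loss on the $r$-th term and the controlled perturbation of the intermediate terms, contradicting optimality.

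Once balance is in hand, the proof finishes mechanically. The equations $\sigma_i + s_i = 1$ for $i < r$, together with $\sigma_i = \sigma_{i-1} + s_i$ and $\sigma_0 = 0$, yield the linear recurrence $s_i = (1 - \sigma_{i-1})/2$, whose solution is $s_i = 2^{-i}$ and $\sigma_i = 1 - 2^{-i}$ for $i < r$. The saturation condition $\sigma_r = 1$ then forces $s_r = 1 - \sigma_{r-1} = 2^{-(r-1)} = 2^{-r+1}$, giving exactly the claimed solution.

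The main obstacle will be the balance step. The objective is only piecewise smooth because of the max, so the perturbation around index $i$ must be shown not to degrade later terms by more than the local gain; one must consider separately the four cases determined by which component of $\max\{\sigma_j, 1-s_j\}$ is active at $j = i$ and $j = r$ before and after the shift, and one must also ensure the shift preserves the sorted order so the simplification $\mu_j = s_j$ remains valid. If this direct variational approach proves too cumbersome, a backup plan is induction on $r$: fix $s_1$ optimally and note that the residual subproblem on $s_2, \ldots, s_r$ has (after rescaling) the same structure as $(\mathrm{MP}_{r-1})$, to which the inductive hypothesis applies.
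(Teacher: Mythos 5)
Your architecture (sort by adjacent swaps, saturate the capacity, derive the balance equations $\sum_{j\le i}s_j=1-s_i$ for $i<r$, solve the recurrence) is a legitimate route to the formula and genuinely different from the paper's, which never sorts: it takes the first index $i$ with $s_i\neq s^*_i$ and perturbs toward $s^*$. The sorting step and the saturation step look sound (the swap inequality does need the feasibility constraint $\sum_j s_j\le 1$ to rule out the bad case where the partial-sum branch is active at position $k$ after the swap, but that works out). The real problem is the balance step, exactly where you flag ``the main obstacle'': the flag is warranted, and ``one checks'' is carrying the whole proof. In the direction $\sum_{j\le i}s_j<1-s_i$ you move mass from $s_r$ to $s_i$. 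This increases every partial sum $\sigma_k=\sum_{j\le k}s_j$ for $i\le k<r$, so \emph{every} intermediate index $k$ whose sum branch is active loses at rate $s_k/\sigma_k^2$, on top of the loss of rate $1-s_r$ at the $r$th term, against a gain of $1/(1-s_i)^2$ at the $i$th term. There can be many such saturated intermediate indices, and the natural estimates do not close the gap: to first order the gain over the $r$th-term loss is about $2s_i+s_r$, while the saturated intermediate sizes can sum to nearly $2s_i$, so when $s_i$ is small and $s_r\ll s_i^2$ the sign is left undetermined by this bookkeeping. You have not shown the perturbation improves the objective, and it is not obvious that it does.

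The paper resolves precisely this difficulty by choosing the perturbation partner differently: it pairs $s_i$ with $s_h$, where $h$ is the \emph{first} index after $i$ at which $\sum_{j\le h}s_j\ge 1-\min_{j\le h}s_j$. With that choice, every index strictly between $i$ and $h$ has its min branch strictly active and can only improve under a small transfer; every index after $h$ is provably unchanged, since its denominator reduces to $\max\bigl\{\sum_{j\le k}s_j,\,1-\min_{h<j\le k}s_j\bigr\}$, which does not depend on how mass is split between positions $i$ and $h$; and the whole comparison collapses to a single loss at $h$, whose rate is shown (in two subcases) to be at most $1/(1-s_i)^2$, after which a strict gain follows from integrating $1/(1-s_i')^2-1/(1-s_i)^2>0$. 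I would adopt that choice of partner rather than $s_r$, or else actually prove the summed-loss bound your version requires. Your fallback (induction on $r$ after fixing $s_1$) also does not go through as stated: the constants $1$ in the capacity constraint and in $1-\min_j s_j$ do not rescale, so the residual problem on $s_2,\dots,s_r$ is not an instance of $\mathrm{MP}_{r-1}$.
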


It follows that the optimal value of \eqref{LP:factor} is
\(\lambda_r = \sum_{i =1}^{r-1} \frac{1}{2^i-1} + \frac{1}{2^{r-1}}\).
This expression increases as $r$ grows. Therefore, the value is
always at most \[\lambda = \sum_{i=1}^\infty \frac{1}{2^i -1}.\]

To lower bound the performance of the SS algorithm we use a
construction based on Graham's original paper: The instance $I$
has for each $i \in [r-1]$, $N$ items of size $2^{-i} +
\varepsilon $, and for $i=r$, $N$ items of size
$2^{-r+1}-r\varepsilon$, where $\varepsilon = 2^{-2r}$ and $N$
is large enough so that $N/s_i$ is integral for all $i$. The SS
algorithm first packs the smallest items into $N/2^{r-1}$ bins,
then it packs the next smallest items into $N/(2^{r-1} -1)$
bins, the next items into $N/(2^{r-2} -1)$ bins, and so on. On
the other hand, the optimal solution uses only $N$ bins. If we
choose $r$ to be such that $2^{r}-1 \geq \delta^{-1}$ then we
get \[\abs{\packing{B}{I}} = \lambda_r \abs{\packing{O}{I}}
\geq \left( \lambda -
  \frac{1}{2^{r} -1} \right) \abs{\packing{O}{I}} \geq (\lambda -\delta)
\abs{\packing{O}{I}}.\]
Note that this lower bound example, for the case where there are $r$ distinct
item sizes, gives exactly the upper bound we found for $\mathrm{MP}_r$.

\begin{corollary}
  For $\alpha \in (\frac12,1]$, the approximation ratio of the SS algorithm is
  $R_{SS}^\infty(\alpha) = \sum_{i =1}^{\infty} \frac{1}{2^i -1} \approx
  1.6067$. Furthermore, the SPoA/SPoS of the $BP(\alpha)$ game has the same value.
\end{corollary}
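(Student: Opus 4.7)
\begin{proof1}[Proof sketch for the Corollary]
The plan is to deduce both the upper and lower bounds directly from Theorem~\ref{thm:worst-case} and the remarks already established in this section. First, for the upper bound: if $\alpha\le 1$ then $V_\alpha\subseteq V_1$, so any instance with item sizes in $(0,\alpha]$ is in particular an instance of the unrestricted Bin Packing problem. Applying Theorem~\ref{thm:worst-case} to such an instance gives $|\packing{B}{I}|\le \lambda\,|\packing{O}{I}|+1$, so $R^\infty_{SS}(\alpha)\le \lambda=\sum_{i=1}^\infty \frac{1}{2^i-1}$ for every $\alpha\in(0,1]$, and in particular on the range $(\tfrac12,1]$.

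For the lower bound, I would observe that the family of hard instances constructed in the proof of Theorem~\ref{thm:worst-case} actually lives in $V_\alpha$ for every $\alpha>\tfrac12$. The largest item in that construction has size $2^{-1}+\varepsilon=\tfrac12+\varepsilon$ with $\varepsilon=2^{-2r}$; all other items are strictly smaller. Hence, given any $\alpha\in(\tfrac12,1]$ and any $\delta>0$, we can pick $r$ with $2^r-1\ge \delta^{-1}$ and also large enough that $\tfrac12+2^{-2r}\le \alpha$, and the instance then witnesses $|\packing{B}{I}|\ge(\lambda-\delta)\,|\packing{O}{I}|$ using only items of size at most $\alpha$. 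Thus $R^\infty_{SS}(\alpha)\ge \lambda$, and combined with the upper bound this yields $R^\infty_{SS}(\alpha)=\lambda$ on the whole interval $(\tfrac12,1]$.

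Finally, the claim about $\mathit{SPoA}$ and $\mathit{SPoS}$ follows from the equivalence noted at the beginning of Section~2: the equivalence between strong equilibria and outputs of the SS algorithm proved in \cite{EpsteinK08} holds for arbitrary item lists, and in particular for lists in $V_\alpha$. Therefore the characterization of $\mathit{SPoA}/\mathit{SPoS}$ in terms of $R^\infty_{SS}(\alpha)$ carries over verbatim to the parametric game $BP(\alpha)$, giving the same value $\lambda$. The only subtle point in the whole argument is verifying that the lower bound construction is compatible with the size restriction; this is immediate once one observes that as $r\to\infty$ the largest item tends to $\tfrac12$ from above only by the negligible term $\varepsilon=2^{-2r}$, and in particular stays strictly below any fixed $\alpha>\tfrac12$.
\end{proof1}
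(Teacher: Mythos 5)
Your proposal is correct and follows essentially the same route as the paper: the upper bound is inherited from Theorem~\ref{thm:worst-case} since $V_\alpha\subseteq V_1$, the lower bound uses the fact that Graham-style construction's largest item is $\tfrac12+2^{-2r}$ and hence fits in $V_\alpha$ for any $\alpha>\tfrac12$ once $r$ is large, and the SPoA/SPoS claim is the equivalence from \cite{EpsteinK08} restricted to lists in $V_\alpha$. You in fact make explicit the one detail the paper leaves implicit (the size check on the hard instances), which is a welcome addition.
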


%\subsection{Parametric case}

\noindent{\bf Parametric case.} To get a better picture of the performance of
SS, we generalize Theorem~\ref{thm:worst-case} to instances where the size of
the largest item is bounded by a parameter $\alpha$. Our goal is to establish
the worst-case performance of the SS algorithm for instance in $V_\alpha$ for
all $\alpha < 1$.

Let $t$ be the smallest integer such that $\alpha \leq \frac1t$. We proceed as
we did before but with a slightly different weighting function: \begin{equation}
   \label{eq:share-param}
   w_i =
   \begin{cases}
     \frac{s_i}{s(B)} & \text{if } \max\,\{1-s_{\min}, \frac{t}{t+1}\} \leq s(B), \\
     \ s_i & \text{otherwise.}
   \end{cases}
\end{equation}

As before there will be some bins that are underpaid. Let $\hat{B}_1, \ldots,
\hat{B}_r$ be these bins and let $s^i_{\min}$ be smallest yet-unpacked item
when the algorithm opened $\hat{B}_i$. These bins only pay for a
$s(\hat{B}_i)$ fraction of their cost. Even though we now have a more
restrictive charging rule, the total amount underpaid is still at most~1. For
all $i< r$, when $s(\hat{B}_i) < 1-s^i_{\min}$, the same argument used above
yields \[1-s(\hat{B}_i) < s^{i+1}_{\min} - s^i_{\min}.\]

Suppose that for some $i$ we have $s(\hat{B}_i) < \frac{t}{t+1}$ but
$s(\hat{B}_i) > 1 - s^i_{\min}$. Note that this implies $s^i_{\min} >
1/(t+1)$. Since at this point every item has size in $(\frac1{t+1},\frac1t]$,
if there were left at least $t$ items left just before $\hat{B}_i$ was opened,
we could pack a bin with total size greater than $\frac{t}{t+1}$. Therefore,
$\hat{B}_i$ must be the last bin packed by the algorithm. Regardless whether
such a bin exists or not, we always have $1- s(\hat{B}_r) \leq 1 -
s^r_{\min}$. Hence, the total amount underpaid is
\begin{equation*}
   \sum_{i=1}^r 1 - s(\hat{B}_i) \leq \sum_{i=1}^{r-1} (s^{i+1}_{\min} -
   s^i_{\min}) + (1- s_{\min}^r) \leq 1.
\end{equation*}

The new weighting function \eqref{eq:share-param} leads to the following
mathematical program

\begin{gather} \label{LP:factor-param} \tag{$\mathrm{MP}_r^t$}
  \text{maximize} \hspace{1em} \sum_{i=1}^r \frac{s_i}{\max \left\{
      \sum_{j=0}^i s_j,\, 1- \min_{1\leq j \leq i} s_j,\, t/(t+1)\right\} } \\[-6ex] \notag
\end{gather} \hspace{1.5cm} subject to \\[-3ex] \begin{align}
  \sum_{i=0}^r s_i\ & \leq 1 \notag   \\
  \notag
  s_i  & \geq 0 & \forall\ i \in [r] \\
  \notag
  s_i  & \leq 1/t & \forall\ i \in [r-1]
\end{align}

Notice that $s_r$ is allowed to be greater than $1/t$. This relaxation does
not affect the value of the optimal solution, but it helps to simplify our
analysis. From now on, we assume that $r\geq t$; for otherwise the program
become trivial. Define $\lambda_r^t$ to be the value of
\eqref{LP:factor-param} and $\lambda^t = \sup_r \lambda_r^t$.

\begin{theorem}
  \label{thm:worst-case-param}
  Let $t \geq 2$ be an integer and $\alpha \in (\frac{1}{t+1}, \frac1m]$. For
  every instance $I \in V_\alpha$, we have $|\packing{B}{I}| \leq \lambda^t\,
  |\packing{O}{I}| + 1$. Furthermore, for every $\delta > 0$, there exist an
  instance $I \in V_\alpha$ such that $|\packing{B}{I}| \geq \left( \lambda^t - \delta
  \right) \, |\packing{O}{I}|$.
\end{theorem}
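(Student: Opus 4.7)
My plan is to mimic the proof of Theorem~\ref{thm:worst-case} essentially verbatim, using the parametric weighting rule \eqref{eq:share-param} and the mathematical program \eqref{LP:factor-param} in place of \eqref{eq:share} and \eqref{LP:factor}. The upper bound splits into two ingredients. First, the underpayment (the cost of bins whose charge is less than~$1$) must be bounded by an additive constant; this is precisely the computation carried out in the text just before the theorem, which yields total underpayment at most~$1$ via the two-case argument on whether $s(\hat{B}_i) < 1-s^i_{\min}$ or $s(\hat{B}_i) < t/(t+1)$. Second, I need to bound $w(O)$ for each optimal bin $O$. Fix $O$ and list its items $s_1, \ldots, s_r$ in reverse order of how SS packs them, writing $O_i = \{1, \ldots, i\}$. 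When SS packs item $i$ into some bin $B$, every item of $O_i$ is still unpacked, so $s_{\min} \leq \min(O_i)$, and because $O_i$ is a valid candidate set the greedy choice gives $s(B) \geq s(O_i)$. A short case analysis on whether $s(B) \geq \max\{1-s_{\min},\, t/(t+1)\}$ then yields
\[
  w_i \;\leq\; \frac{s_i}{\max\bigl\{\,s(O_i),\; 1-\min(O_i),\; t/(t+1)\,\bigr\}},
\]
the ``else'' branch being trivial because the denominator is at most~$1$ while $w_i = s_i$.

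Summing this inequality over $i \in O$ shows that $(s_1, \ldots, s_r)$ is a feasible solution of \eqref{LP:factor-param} (the sum is at most~$1$ and each $s_i \leq \alpha \leq 1/t$), and the objective value equals an upper bound on $w(O)$. Hence $w(O) \leq \lambda_r^t \leq \lambda^t$. Combining with the underpayment bound gives
\[
  |\packing{B}{I}| \;\leq\; \sum_{O \in \packing{O}{I}} w(O) + 1 \;\leq\; \lambda^t\,|\packing{O}{I}| + 1,
\]
exactly as in the unparametric proof.

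For the lower bound, the plan is to adapt Graham's construction. For any integer~$r$ I would take an optimal (or near-optimal) solution $(s_1^\star, \ldots, s_r^\star)$ of \eqref{LP:factor-param} and build the instance consisting of $N$ copies of each ``type'' with sizes $s_i^\star + \varepsilon_i$, the $\varepsilon_i$'s chosen tiny and geometrically decreasing and with a matching decrement absorbed in the largest type (exactly as in the $\alpha = 1$ construction), and with $N$ chosen divisible enough that $N$ bins each containing one item of every type form a valid optimal packing. By the same reasoning as in Theorem~\ref{thm:worst-case}, SS consumes items from smallest to largest, filling each subsequent bin with strictly fewer items of a given type and producing a ratio that tends to $\lambda_r^t$ as $\varepsilon \to 0$; picking $r$ large enough pushes this ratio within~$\delta$ of $\lambda^t$. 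One has to check that every $s_i^\star + \varepsilon_i \leq \alpha$ so that the instance lies in $V_\alpha$, which is where the constraint $s_i \leq 1/t$ in \eqref{LP:factor-param} is crucially used.

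The main obstacle is the lower bound rather than the upper bound: in the unparametric case Lemma~\ref{lem:optimal-solution} gives a clean closed form for the maximizer of \eqref{LP:factor}, but in the parametric case the cap $s_i \leq 1/t$ and the extra floor $t/(t+1)$ inside the denominator change the geometry, so the optimal profile will saturate the $1/t$ bound for the first several items (until the partial sum exceeds $t/(t+1)$) and then transition to a doubling-like pattern for the remaining items. Carrying out that structural analysis precisely enough to describe the associated Graham-style instance, and to certify that SS indeed packs it in the claimed ``wrong'' way, is where the real work lies; the rest of the proof is a faithful transcription of the $\alpha = 1$ argument.
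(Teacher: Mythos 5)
Your upper bound argument is correct and coincides with the paper's: the underpayment is bounded by $1$ exactly as in the text preceding the theorem, and the pointwise bound $w_i \leq s_i / \max\{s(O_i),\, 1-\min(O_i),\, t/(t+1)\}$ (using $s_{\min}\le\min(O_i)$ and $s(B)\ge s(O_i)$, with the ``else'' branch handled by the denominator being at most $1$) shows that each optimal bin induces a feasible solution of \eqref{LP:factor-param}, whence $w(O)\le\lambda^t_{|O|}\le\lambda^t$.

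The gap is in the lower bound, and it is precisely the part you flag as ``where the real work lies.'' Proving $|\packing{B}{I}|\ge(\lambda^t-\delta)|\packing{O}{I}|$ requires an instance whose ratio approaches the actual optimum of \eqref{LP:factor-param}, so one must determine that optimum; the paper does this in Lemma~\ref{lem:optimal-solution-param} (the maximizer has $s^*_i=x$ for $i<t$ and then repeatedly halves the residual $1-(t-1)x$, for a free parameter $x\in[\frac{1}{t+1},\frac1t]$) together with Lemma~\ref{lem:maximum-x} (the value $\lambda^t(x)$ is maximized at $x=\frac{1}{t+1}$, shown by a term-wise differentiation argument plus an explicit comparison of the two endpoints). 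Your stated intuition --- that the optimal profile saturates the cap $s_i=\frac1t$ until the partial sum passes $\frac{t}{t+1}$ --- picks the wrong endpoint: Lemma~\ref{lem:maximum-x} establishes $\lambda^t(\frac{1}{t+1})>\lambda^t(\frac1t)$, so the extremal profile takes its first $t$ items of size essentially $\frac{1}{t+1}$ and then halves the residual $\frac{2}{t+1}$, yielding $\lambda^t=1+\sum_{i\ge1}\frac{1}{(t+1)2^i-1}$. A Graham-style instance built from the $\frac1t$-saturating profile would only certify the strictly smaller ratio $\lambda^t(\frac1t)$ and hence would not prove the theorem. The missing content is therefore the two structural lemmas; the remainder of your plan (perturb the optimal profile, let SS consume the types from smallest upward, and check that all sizes stay in $(0,\alpha]$) matches the paper's construction.
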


The proof of the upper bound is identical to that of
Theorem~\ref{thm:worst-case}. We only need to derive the counterpart of
Lemma~\ref{lem:optimal-solution} for \eqref{LP:factor-param}. Unlike its
predecessor, Lemma~\ref{lem:optimal-solution-param} does not fully
characterize the structure of the optimal solution of
\eqref{LP:factor-param}. Rather, we define an optimal solution $s^*$ as a
function of a free parameter $x$.

\begin{lemma}
  \label{lem:optimal-solution-param}
  An optimal solution to \eqref{LP:factor-param} has the form
  \begin{equation*}
     s^*_i =
     \begin{cases}
       x & \text{if } i < t, \\
       \frac{1-x(t-1)}{2^{i-t + 1}} & \text{if } t \leq i < r, \\
       \frac{1-x (t-1)}{2^{r-t}} & \text{if } i = r,
   \end{cases}
  \end{equation*}
  for some $x \in [\frac{1}{t+1}, \frac{1}{t}]$.
\end{lemma}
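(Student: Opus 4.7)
The plan is to split the analysis of $\mathrm{MP}_r^t$ into two regimes determined by which branch of the three-way $\max$ in the denominator dominates: for $i<t$ the floor $t/(t+1)$ dominates, pinning the first $t-1$ items to a common value $x$; and for $i\geq t$ the running sum $S_i$ dominates, reducing the tail of the problem to a shifted copy of the unparametric program already analyzed in Lemma~\ref{lem:optimal-solution}.

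First I would show that in an optimal solution we may take $s_1 = \cdots = s_{t-1} = x$ for some $x \in [1/(t+1),\,1/t]$. Since $s_j \leq 1/t$ for $j<t$, any partial sum over these indices is at most $(t-1)/t < t/(t+1)$; and if additionally $s_j \geq 1/(t+1)$ for every such $j$, then $1 - \min_{j\leq i} s_j \leq t/(t+1)$, so the denominator is pinned at $t/(t+1)$ for every $i<t$. The contribution of the first $t-1$ terms is then $\tfrac{t+1}{t}\sum_{j<t} s_j$, which depends only on the total mass placed in those positions and not on its distribution, so equalizing them to a common value $x$ is without loss of generality. The lower bound $x \geq 1/(t+1)$ is forced because any $s_j < 1/(t+1)$ would make $1-s_j > t/(t+1)$ dominate $D_i$ at every later index where $s_j$ is still the minimum, strictly hurting the objective; shifting the deficit mass out and raising $x$ to $1/(t+1)$ is at least as good.

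Next, with $s_1,\ldots,s_{t-1}$ fixed at $x$, the residual problem on $s_t,\ldots,s_r$ has total mass $M := 1-(t-1)x$, and for $i \geq t$ the partial sum is $S_i = (t-1)x + \sum_{j=t}^i s_j$. I would verify inductively, using the halving form, that $S_i$ exceeds both $t/(t+1)$ and $1 - \min_{j \leq i} s_j$ throughout, so the denominator collapses to $S_i$ and the sub-problem becomes to maximize $\sum_{i=t}^r s_i/S_i$. This is a translated copy of $\mathrm{MP}_{r-t+1}$ whose partial sum starts at the baseline $(t-1)x$, and the argument of Lemma~\ref{lem:optimal-solution} transfers verbatim, yielding the recurrence $S_i = (S_{i-1}+1)/2$ for $t \leq i < r$, i.e., each $s_i$ halves the gap between $S_{i-1}$ and $1$, with $s_r$ filling the remainder. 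Unwinding gives $s_{t+k} = M/2^{k+1}$ for $0 \leq k < r-t$ and $s_r = M/2^{r-t}$, which is precisely the claimed formula after substituting $M = 1 - (t-1)x$.

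The main obstacle is confirming, throughout the halving prescription, that $S_i$ really is the dominant branch of the max at every index $i \geq t$. The boundary case $i=t$ is tightest: substituting $s_t = M/2$ requires both $(t-1)x + M/2 \geq t/(t+1)$ and $(t-1)x + M/2 \geq 1 - M/2$, which reduce to short algebraic checks valid for $x \in [1/(t+1), 1/t]$ and $t \geq 2$. Once this dominance is confirmed at the boundary, it propagates through the induction via the halving relation, and the remainder of the proof is effectively a re-run of Lemma~\ref{lem:optimal-solution} on the shifted sub-problem.
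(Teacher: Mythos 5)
Your skeleton --- force $s_j \ge \frac{1}{t+1}$ for all $j<t$, equalize those coordinates to a common $x\in[\frac{1}{t+1},\frac1t]$ using the fact that their denominators are pinned at $\frac{t}{t+1}$, and then rerun the machinery of Lemma~\ref{lem:optimal-solution} on the tail --- is exactly the paper's proof. But one step in the middle is wrong as stated and would derail a literal execution: the claim that for $i\ge t$ the denominator ``collapses to $S_i$,'' so that ``the sub-problem becomes to maximize $\sum_{i=t}^r s_i/S_i$.'' This fails for arbitrary feasible tails: for $t=2$, $x=\frac13$ and any $s_t<\frac13$ one has $1-\min_{j\le t}s_j=1-s_t>\frac13+s_t=S_t$, so the $1-\min$ branch is the active one. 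Worse, if the tail problem really were $\max\sum_i s_i/S_i$ at fixed total mass, halving would \emph{not} be optimal --- splitting an item of size $2a$ sitting on baseline $b$ into two items of size $a$ strictly increases that objective, since $\frac{a}{b+a}+\frac{a}{b+2a}>\frac{2a}{b+2a}$, so the supremum is approached by infinitely many infinitesimal items. It is precisely the $1-\min$ branch that penalizes undersized items and makes the halving solution optimal; for that solution the two branches coincide ($S_i=1-s_i$ for all $i\ge t$), so neither ``dominates.''

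The repair is the move you gesture at in the same sentence, and it is what the paper does: do not replace the objective, but run the first-index-of-disagreement perturbation of Lemma~\ref{lem:optimal-solution} directly on \eqref{LP:factor-param}. When $s_i<s^*_i$ the active denominator near $s^*$ is $1-s_i'$ (this is exactly where your boundary checks $1-s_t'\ge\frac{t}{t+1}$ and $1-s_t'>\sum_{j\le t}s_j'$ are needed), giving a gain rate $\frac{1}{(1-s_i')^2}$ that beats the loss rate at the index the mass is taken from; when $s_i>s^*_i$ one uses the $S_i$ branch. Two smaller points also need a line each: the normalization $\sum_j s_j=1$ (top up $s_r$, which has no upper bound), and the fact that equalizing $s_1,\dots,s_{t-1}$ can only help the terms with $i\ge t$ because it raises $\min_{j<t}s_j$. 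Your argument for $x\ge\frac{1}{t+1}$ likewise needs the explicit rate comparison rather than ``shifting the deficit mass out is at least as good,'' but that is the same comparison as above.
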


For any $x \in [\frac1{t+1}, \frac1t]$, we can construct a
solution $s^*$ for \eqref{LP:factor-param} as described in
Lemma~\ref{lem:optimal-solution-param}. Let $\lambda_r^t(x)$ be
the value of the value of this solution, that is,
\[\lambda^t_r(x) = x\, (t-1)\, \frac{t+1}{t} +
\sum_{i = 1}^{r-t} \frac{1}{\frac{2^i}{1-(t-1)x}-1}
+\frac{1}{\frac{2^{r-t}}{1-(t-1)x}}.\]
For any fixed $x$, the quantity $\lambda_r^t(x)$ increases as $r \rightarrow
\infty$. Therefore, it is enough to look at its limit value, which we denote
by $\lambda^t(x)$:
\[\lambda^t(x) = \lim_{r \rightarrow \infty} \lambda^t_r(x) = x\, (t-1)\,
\frac{t+1}{t} + \sum_{i = 1}^{\infty} \frac{1}{\frac{2^i}{1-(t-1)x}-1}.\]
It only remains to identify the value $x \in [\frac{1}{t+1}, \frac1t]$
maximizing $\lambda^t(x)$.

\begin{lemma}
  \label{lem:maximum-x}
  For every $t \geq 2$, the function $\lambda^t(x)$ in the domain
  $[\frac{1}{t+1}, \frac1t]$ attains its maximum at $x = \frac{1}{t+1}$.
\end{lemma}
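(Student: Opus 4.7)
The plan is to exploit convexity and reduce the problem to comparing two endpoint values. First, substituting $y = (t-1)x$ I can rewrite $\lambda^t(x) = f(y)$ with $f(y) = \frac{(t+1)y}{t} + \sum_{j=1}^\infty \frac{1-y}{2^j - 1 + y}$. Differentiating term-by-term gives $f''(y) = \sum_{j=1}^\infty \frac{2 \cdot 2^j}{(2^j-1+y)^3} > 0$, so $f$ is strictly convex on the interval of $y$-values corresponding to $x \in [\frac{1}{t+1},\frac1t]$. Since $y$ is affine in $x$, $\lambda^t(x)$ is convex in $x$ and its maximum on $[\frac{1}{t+1},\frac1t]$ is attained at one of the endpoints.

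It then suffices to show $\lambda^t(\frac{1}{t+1}) \geq \lambda^t(\frac1t)$. Direct substitution yields
\[ \lambda^t\!\left(\frac{1}{t+1}\right) = \frac{t-1}{t} + \sum_{j=0}^\infty \frac{1}{2^j(t+1)-1}, \qquad \lambda^t\!\left(\frac1t\right) = \frac{t^2-1}{t^2} + \sum_{j=1}^\infty \frac{1}{2^j t - 1}, \]
and the partial-fraction identity $\frac{1}{2^j t - 1} - \frac{1}{2^j(t+1)-1} = \frac{2^j}{(2^j t - 1)(2^j(t+1)-1)}$ combines these into
\[ \lambda^t\!\left(\frac{1}{t+1}\right) - \lambda^t\!\left(\frac1t\right) = \frac{1}{t^2} - \sum_{j=1}^\infty \frac{2^j}{(2^j t - 1)(2^j(t+1)-1)}. \]
So the lemma reduces to the single inequality $(\ast)$: $\sum_{j \geq 1} \frac{2^j}{(2^j t - 1)(2^j(t+1)-1)} \leq \frac{1}{t^2}$.

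To prove $(\ast)$ I would expand $\frac{1}{2^j k - 1} = \sum_{m \geq 1} (2^j k)^{-m}$ and swap the order of summation, rewriting the LHS of $(\ast)$ as $\sum_{m \geq 1} \frac{1}{2^m - 1}\left(\frac{1}{t^m} - \frac{1}{(t+1)^m}\right)$. The mean-value bound $(t+1)^m - t^m \leq m(t+1)^{m-1}$ gives $\frac{1}{t^m} - \frac{1}{(t+1)^m} \leq \frac{m}{t^m(t+1)}$; since the $m=1$ contribution already matches $\frac{1}{t(t+1)}$ exactly, this reduces $(\ast)$ to the auxiliary bound $\sum_{m \geq 2} \frac{m}{(2^m - 1)\, t^{m-2}} \leq 1$. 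In this sum the $m \geq 3$ summands are decreasing in $t$ and the $m=2$ summand equals $\frac23$ independent of $t$, so it suffices to verify the case $t=2$. There, using $2^m - 1 \geq 7 \cdot 2^{m-3}$ for $m \geq 3$ (tight at $m=3$) together with the elementary evaluation $\sum_{m \geq 3} m/4^m = 5/72$, the sum is bounded by $\frac{2}{3} + \frac{32}{7}\cdot\frac{5}{72} = \frac{62}{63} < 1$, which finishes the argument.

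The main obstacle is that $(\ast)$ is quantitatively sharp: for $t=2$ the two sides of $\lambda^t(\frac{1}{t+1}) \geq \lambda^t(\frac1t)$ differ by only about $0.017$, so any crude geometric bound on the LHS (for instance using $2^j t - 1 \geq 2^{j-1}(2t-1)$ uniformly in $j$) overshoots $\frac{1}{t^2}$. The detour through the double series and the refined step $2^m - 1 \geq 7 \cdot 2^{m-3}$ is what preserves enough of the cancellations in the series to close the gap.
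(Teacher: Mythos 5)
Your proof is correct, and its skeleton coincides with the paper's: both arguments first reduce the problem to the two endpoints of the interval and then verify the single series inequality $\sum_{j\ge 1} \frac{2^j}{(2^j t-1)(2^j(t+1)-1)} \le \frac{1}{t^2}$, which is exactly the paper's inequality $\sum_{i\ge 1}\frac{1}{2^i t^2+(2^i-2)t-1+2^{-i}}\le \frac1{t^2}$ after multiplying out the denominators. The differences are in execution. For the reduction to endpoints, your substitution $y=(t-1)x$ together with the observation that $f''(y)=\sum_j 2\cdot 2^j/(2^j-1+y)^3>0$ is a genuine convexity argument and is, if anything, cleaner than the paper's, which substitutes $y=1/(1-(t-1)x)$ and argues only that $g'$ changes sign at most once, from negative to positive. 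For the endpoint comparison, the paper gets away with a short term-by-term bound: each denominator with $i\ge 2$ already exceeds $2^i t^2$, and the first two terms together are below $\frac{3}{4t^2}=\frac{1}{2t^2}+\frac{1}{4t^2}$, so the whole sum is below $\sum_i \frac{1}{2^i t^2}=\frac1{t^2}$. Your detour through the rearranged double series $\sum_m \frac{1}{2^m-1}\bigl(t^{-m}-(t+1)^{-m}\bigr)$, the mean-value bound, and the numerical estimate $\frac23+\frac{32}{7}\cdot\frac{5}{72}=\frac{62}{63}$ at $t=2$ is valid (the interchange of summation is justified by positivity, and all the arithmetic checks out) but considerably longer; your closing remark that a crude geometric bound overshoots is true of the specific bound you name, yet the paper's only slightly sharper direct comparison does close the gap without any series rearrangement. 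Both routes establish the strict inequality $\lambda^t(\frac1{t+1})>\lambda^t(\frac1t)$, so the lemma follows either way.
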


It follows that $\lambda^t = \lambda^t\big(\frac{1}{t+1}\big)$,
that is, \[\lambda^t = 1 + \sum_{i = 1}^\infty \frac{1}{(t+1) \,
2^i -1}.\]

Note that for a specific value of $r$,
\[\lambda^t_r\big({\textstyle \frac{1}{t+1}}\big) = 1+ \sum_{i
= 1}^{r-t-1} \frac{1}{(t+1) \,2^{i}-1} +\frac{1}{(t+1) \, 2^{r-t-1}}.\]

For the lower bound on the performance of the SS algorithm, consider the
instance $I$ that for each $i \in [t]$ has $N$ items of size $\frac{1}{t+1} +
\varepsilon $, for each $i \in (t,r)$, it has $N$ items of size
$\frac{1}{(t+1)\,2^{i-t}} + \varepsilon$, and for $i=r$, there are $N$ items
of size $\frac{1}{(t+1)\,2^{r-1-t}} -r \varepsilon$, where $\varepsilon =
\frac{1}{(t+1)^2 \, 2^{-2r}}$ and $N$ is large enough so that $N/s_i$ is
integral for all $i$. The SS algorithm first packs the smallest items into
$\frac{N}{(t+1) 2^{r-t-1}}$ bins, then it packs the next smallest items into
$\frac{N}{(t+1) 2^{r-1-t} -1}$ bins, and so on until reaching the items of
size $\frac{1}{t+1}+\varepsilon$ which are packed into $N$ bins. The optimal
solution uses $N$ bins. If we choose $r$ to be such that $(t+1)2^{r-t}-1 \geq
\delta^{-1}$ then we get

%As before, using $r$ types of items, gives the same value as the
%upper bound on the  solution of $\mathrm{MP}_r^t$, which we found.

%\begin{align*}
%  \abs{\packing{B}{I}} & \geq  \left( 1 + \sum_{i = 1}^{r-t} \frac{1}{(t+1) 2^i - 1}
%\right) \abs{\packing{O}{I}}, \\ & \geq \left( \lambda^t - \frac{1}{(t\!+\!1) 2^{r-t} -1} \right) \abs{\packing{O}{I}},
%\\[1ex] & \geq \big(\lambda^t -\delta\big) \abs{\packing{O}{I}}. \end{align*}

\[ \abs{\packing{B}{I}} = \lambda_r^t\big({ \textstyle \frac{1}{t+1} } \big)\,
  \abs{\packing{O}{I}}\ \geq \left( \lambda^t - \frac{1}{(t\!+\!1) 2^{r-t} -1}
  \right) \abs{\packing{O}{I}}\geq \big(\lambda^t -\delta\big)
  \abs{\packing{O}{I}}.\]

\begin{corollary}
  For each integer $t \geq 1$ and $\alpha \in (\frac{1}{t+1}, \frac{1}{t}]$,
  the SS algorithm has an approximation ratio of $R_{SS}^\infty(\alpha) = 1 +
  \sum_{i =1}^{\infty} \frac{1}{(t+1) 2^i -1}$. Furthermore, the SPoA/SPoS of the
  $BP(\alpha)$ game has the same value.
\end{corollary}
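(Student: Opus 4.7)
The plan is to assemble the corollary from the results already established in the section: combine Theorem~\ref{thm:worst-case-param} with Lemma~\ref{lem:maximum-x} to extract a closed form for $\lambda^t$, handle the $t=1$ boundary case via the earlier corollary based on Theorem~\ref{thm:worst-case}, and finally invoke the equivalence proved in \cite{EpsteinK08} to transfer the approximation guarantee to SPoA and SPoS. Concretely, for $t \geq 2$ and $\alpha \in (\tfrac{1}{t+1}, \tfrac{1}{t}]$, Theorem~\ref{thm:worst-case-param} already gives $R_{SS}^\infty(\alpha) = \lambda^t$ with matching asymptotic upper and lower bounds, while Lemmas~\ref{lem:optimal-solution-param} and~\ref{lem:maximum-x} together identify $\lambda^t = \lambda^t\bigl(\tfrac{1}{t+1}\bigr)$.

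What remains is a short algebraic simplification of $\lambda^t\bigl(\tfrac{1}{t+1}\bigr)$. Substituting $x = \tfrac{1}{t+1}$ into the displayed formula for $\lambda^t(x)$ turns the first term into $\tfrac{t-1}{t}$, while $1-(t-1)x = \tfrac{2}{t+1}$ reshapes the $i$-th summand of the series into $\tfrac{1}{(t+1)2^{i-1}-1}$. Separating out the $i=1$ contribution $\tfrac{1}{t}$ collapses $\tfrac{t-1}{t}+\tfrac{1}{t}$ to $1$, leaving precisely $1+\sum_{i=1}^\infty \tfrac{1}{(t+1)2^i - 1}$, as claimed.

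For the boundary case $t=1$, Theorem~\ref{thm:worst-case-param} is not directly available (it assumes $t \geq 2$), but the first corollary of Section~2 (derived from Theorem~\ref{thm:worst-case}) already gives $R_{SS}^\infty(\alpha) = \sum_{i=1}^\infty \tfrac{1}{2^i-1}$ on $\alpha \in (\tfrac12, 1]$; pulling off the $i=1$ term of that series exhibits it as $1 + \sum_{i=1}^\infty \tfrac{1}{2\cdot 2^i - 1}$, which is precisely the general formula with $t=1$, so the two cases fuse into a single statement. The SPoA/SPoS half is then immediate from the equivalence recalled at the start of the section: since every output of SS is a strong Nash equilibrium and every strong Nash equilibrium is realizable as an output of SS~\cite{EpsteinK08}, the worst- and best-case equilibrium costs both coincide asymptotically with $R_{SS}^\infty(\alpha) \cdot |\packing{O}{I}|$ up to an additive constant. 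The substantive obstacles of the section were the optimization steps inside Lemmas~\ref{lem:optimal-solution-param} and~\ref{lem:maximum-x}; the corollary itself is purely a book-keeping consolidation of the resulting series.
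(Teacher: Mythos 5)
Your proposal is correct and follows essentially the same route as the paper, which states this corollary without a separate proof precisely because it is the direct assembly of Theorem~\ref{thm:worst-case-param}, Lemmas~\ref{lem:optimal-solution-param} and~\ref{lem:maximum-x}, the displayed closed form for $\lambda^t\bigl(\frac{1}{t+1}\bigr)$, and the SS/strong-equilibrium equivalence of \cite{EpsteinK08} recalled at the start of Section~2. Your algebraic simplification of $\lambda^t\bigl(\frac{1}{t+1}\bigr)$ and your observation that the $t=1$ case of the general formula coincides with $\sum_{i=1}^\infty \frac{1}{2^i-1}$ from the first corollary are both accurate.
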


Figure~\ref{fig:ratio-comparison} compares our bound with the previously known upper
bounds and lower bounds of Caprara and Pferschy \cite{CapraraP04}. Note that
the true ratio lies strictly between previous bounds.

\section{Analysis of the Price of Anarchy}

We now provide a lower bound for the Price of Anarchy of the parametric bin packing game with bounded size items. In addition we prove a very close upper bound for each value of $\frac{1}{t+1}<\alpha\leq\frac1 t$ for a positive integer $t\geq 2$, that is, for all $0<\alpha \leq \frac 1 2$. The case $\frac 1 2 <\alpha <1$ ($t=1$) was extensively studied in \cite{EpsteinK08}.

%\subsection{A construction of lower bound on the \textit{PoA} of parametric BP}

\noindent{\bf A construction of lower bound on the \textit{PoA} of parametric
  Bin Packing.} In this section we give the construction of a lower bound on
\textit{PoA$(\alpha)$}. For each value of $t\geq 2$ we present a set of items
which consists of multiple item lists. This construction is somewhat related
to the construction we gave in \cite{EpsteinK08} for $\frac 1 2<\alpha \leq
1$, though it is not a generalization of the former, which strongly relies on
the fact that each item of size larger than $\frac 1 2$ can be packed alone in
a bin of the $NE$ solution, whereas in the parametric case there are no such
items. It is based upon techniques that are often used to design lower bounds
on bin packing algorithms (see e.g., \cite{LeeLee85}). We should note that our
construction differs from these constructions in the notion of order in which
packed bins are created (which does not exist here) and the demand that each
bin satisfies the Nash stability property. Our lower bound is given by the
following theorem, whose proof appears in Appendix~\ref{app:proof-thm-20}.
%The resulting lower bounds on the \textit{PoA} for $t\geq 2$ are different from any bounds known on the asymptotic approximation ratio of algorithms for parametric bin packing. Since we prove almost matching upper bounds, we may conclude that probably no natural packing algorithm produces the worst \emph{NE} packing (for each and every instance of the game). So, in case we would like to have an algorithm (for each $t\geq 2$) that provides such a packing, we have to design one particularly for this purpose.
%We prove the following theorem.
\begin{theorem} \label{thm:20} For each  integer $t\geq 2$ and $\alpha \in
  (\frac{1}{t+1}, \frac{1}{t}]$, the \textit{PoA} of the $BP(\alpha)$ game is
  at least \\ \(\frac{t^2+\sum\limits_{j=1}^{\infty} {(t+1)^{-j}\cdot
      2^{-j(j-1)/2}}}{t(t-1)+1}.\)
\end{theorem}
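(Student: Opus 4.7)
The plan is to construct, for each integer $t \geq 2$ and each truncation depth $k$, a Parametric Bin Packing instance $I_k \in V_\alpha$ together with an explicit assignment $\pi_k$, and to verify that $\pi_k$ is a Nash equilibrium of $BP(\alpha)$ whose ratio to the optimum approaches the bound above. Taking first $k \to \infty$ recovers the infinite series in the numerator, and then scaling the instance by an overall multiplicity $N \to \infty$ absorbs the additive constants arising in the asymptotic Price of Anarchy.

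The items of $I_k$ will be organized into $k+1$ size classes of geometrically shrinking sizes. Guided by the form of the bound, class $0$ will consist of items of size slightly above $\frac{1}{t+1}$ (so that no more than $t$ of them fit into any bin), and class $j \geq 1$ will consist of items whose size is approximately $\frac{1}{(t+1)\, 2^{j(j-1)/2}}$ plus a tiny $\varepsilon_j$-perturbation; the multiplicities of the classes will be tuned so that each class contributes exactly one term of the sum in the numerator. In the proposed Nash packing $\pi_k$, items of different classes are placed in different bins: class $0$ is packed $t$ items per bin (producing $t^2$ such bins per ``block''), and each class $j \geq 1$ is packed with the appropriate number of items per bin to account for the coefficient $(t+1)^{-j} 2^{-j(j-1)/2}$ in the series. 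An optimal packing, by contrast, bundles one class-$0$ item together with smaller items from each subsequent class into a near-full bin, requiring only $t(t-1)+1$ bins per block.

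The main obstacle is verifying Nash stability of $\pi_k$: for every item $i$ sitting in a bin $A$, one must show that no other bin $B$ satisfying $s(B) > s(A) - s_i$ has room for $i$, i.e., that $s(B) + s_i > 1$ for every such $B$. Because bins of different classes have different loads, this check must be made simultaneously across every ordered pair of classes. The perturbations $\varepsilon_j$ are precisely what makes this work: they must be chosen so that a bin of a ``smaller'' class, even though it may be strictly more loaded than a class-$j$ bin, still overflows when any class-$j$ item is added to it, and symmetrically a class-$j$ bin is never loaded enough to attract an item from a ``smaller'' (later) class. This calibration, although technical, is the heart of the construction; once it is in place, the asymptotic ratio follows from a direct computation of the resulting geometric series.
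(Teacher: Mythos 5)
Your plan is the same family of construction the paper uses (a Lee--Lee-style multi-phase geometric instance with perturbed sizes, a telescoping optimal packing that puts one item of each phase into a near-full bin, and a per-phase NE packing), so the strategy is sound in spirit. However, the concrete parameters you give cannot produce the claimed bound. First, the class-$j$ item sizes must halve from phase to phase, i.e.\ be $\approx \frac{1}{(t+1)2^{j}}$; the factor $2^{-j(j-1)/2}$ in the series belongs to the \emph{number of NE bins} of phase $j$ (it is the product of the per-phase compression factors $(t+1)2^{j-1}$ by which the bin counts shrink), not to the item sizes. With your sizes $\approx \frac{1}{(t+1)2^{j(j-1)/2}}$ the optimal packing breaks: after placing $t$ class-$0$ items of size $\approx\frac{1}{t+1}$ in a bin, only $\frac{1}{t+1}$ of room remains, and your class-$1$ item (of size $\frac{1}{t+1}$, since $2^{0}=1$) already exhausts it, leaving no space for any later class; hence $OPT$ cannot equal $(t(t-1)+1)n$ and the ratio computation collapses. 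Second, making all class-$0$ items ``slightly above $\frac{1}{t+1}$'' so that at most $t$ fit per bin forces the \emph{optimum} to also use one bin per $t$ such items, so class $0$ contributes no gap at all and the numerator term $t^2$ against the denominator $t^2-t+1$ is unobtainable. The paper instead uses several phase-$0$ subclasses, most of size slightly \emph{below} $\frac{1}{t+1}$, calibrated so that the optimum packs $t+1$ of them per bin while every NE bin of $t$ items (all with identical load $\frac{t}{t+1}+2\delta_0$) overflows upon the arrival of any further phase-$0$ item.

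Beyond these parameter errors, the proposal explicitly defers what you yourself call ``the heart of the construction'': the calibration of the perturbations and the Nash verification. This is not a routine omission. The paper must pair items $\pi^{i+1}_j$ with $\theta^i_j$ in the NE bins -- an index-shifted pairing relative to the optimal packing -- precisely so that the smallest item of each phase overflows every bin of its own phase; it must delete a few boundary items to make this pairing feasible; and it must prove a monotonicity claim on bin loads across phases so that only adjacent-phase deviations need checking. None of this is supplied or even sketched in a checkable form, so as written the proposal is an outline of the right kind of argument with incorrect key parameters rather than a proof.
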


\noindent{\bf An upper bound on the \textit{PoA} of parametric
Bin Packing.} We now provide a close upper bound on
\textit{PoA}($\alpha$) for a positive integer $t\geq 2$. The
technique used in \cite{EpsteinK08} can be considered as a
refinement of the one we use here, and here we are also
required to use additional combinatorial propertiies of the NE
packing. To bound the \textit{PoA} from above, we prove the
following theorem. \begin{theorem} \label{thm:36} For each
integer $t\geq 2$, for any instance of the parametric bin
packing game $G\in BP(\frac{1}{t})$: Any \textit{NE} packing
uses at most $\big(\frac{2t^3+t^2+2}{(2t+1)(t^2-t+1)}\big)\cdot
OPT(G)+5$ bins, where $OPT(G)$ is the number of bins used in a
coordinated optimal packing. \end{theorem}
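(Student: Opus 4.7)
The plan is to extend the weighting-function technique used in \cite{EpsteinK08} to the parametric case with items of size at most $1/t$. Let $G \in BP(1/t)$, let $B_1,\ldots,B_m$ denote the bins of a fixed NE packing ordered by non-increasing load, let $O_1,\ldots,O_n$ be the bins of an optimal packing, and let $\rho = (2t^3+t^2+2)/((2t+1)(t^2-t+1))$. The goal is to assign each item $i$ a weight $w_i$ such that $\sum_{i \in B} w_i \geq 1$ for every NE bin $B$ (with a constant number of exceptions) and $\sum_{i \in O} w_i \leq \rho$ for every OPT bin $O$; summing the two sides then yields $m \leq \rho\, OPT(G) + 5$.

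First I would collect the structural facts about NE that the analysis relies on. Applying the Nash condition to any item $i \in B_j$ and any bin $B_k$ with $k<j$ gives $s_i > 1 - s(B_k)$: since $s(B_k) \geq s(B_j)$, moving $i$ to $B_k$ would strictly decrease its cost unless it overflows. Because $s_i \leq 1/t$, this forces $s(B_k) > (t-1)/t$ for every $k \leq m-1$, so at most one NE bin (the last) has load at most $(t-1)/t$. The sharper consequence, which the naive bound $m \leq tn/(t-1)+1$ ignores, is that whenever $B_{j-1}$ has load only slightly above $(t-1)/t$, every item of $B_j$ must have size very close to $1/t$; equivalently, NE bins that are only barely loaded above the threshold force the bins immediately preceding them to be very full. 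This is the ``additional combinatorial property'' the paper refers to.

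With these facts in hand, I would define the weighting function $w:(0,1/t]\to\mathbb{R}_{\geq 0}$ as a piecewise-linear function with breakpoints placed at a few critical sizes (most likely at $1/(t+1)$ and at values such as $1-s(B_{j-1})$ that determine when the NE inequality becomes tight). The design must give small items a lower weight-per-size and large items a higher weight-per-size, so that OPT bins packed with many small items fall well below $\rho$ while OPT bins packed with a few large items match $\rho$ exactly. Verifying $\sum_{i \in B} w_i \geq 1$ on each NE bin proceeds by case analysis on $s(B)$ and on how many items $B$ contains; the case where $s(B)$ is just above $(t-1)/t$ is handled by the structural observation above, which pins down the sizes of items in that bin. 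Verifying $\sum_{i \in O} w_i \leq \rho$ on each OPT bin reduces to a finite optimization over multisets of sizes in $(0,1/t]$ of total at most $1$, solved using the piecewise-linear structure of $w$.

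The main obstacle is producing a weight function that simultaneously tightens both inequalities to exactly $\rho$: a purely size-proportional weight $w(s)=ts/(t-1)$ already validates the NE side but only yields the weaker ratio $t/(t-1)$. The factors $(2t+1)$ and $(t^2-t+1)$ in the denominator of $\rho$ indicate that the extremal OPT bins consist of items of only a few specific sizes in $(1/(t+1),1/t]$, mirroring the lower bound of Theorem~\ref{thm:20}; the breakpoints of $w$ must be chosen to make this extremal configuration tight on both sides. The additive constant $5$ absorbs the at-most-one low-load last NE bin together with a constant number of ``boundary'' bins at which the inequality $\sum_{i \in B} w_i \geq 1$ has to be relaxed to keep $w$ globally consistent with the OPT bound.
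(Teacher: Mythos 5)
Your overall framework---a weighting function giving every NE bin weight at least $1$ (up to a constant number of exceptions) while every OPT bin gets weight at most $\rho=\frac{2t^3+t^2+2}{(2t+1)(t^2-t+1)}$---is the right starting point and matches the paper's first steps, but it cannot be carried out as stated, and the missing piece is exactly what separates $\rho$ from the First-Fit ratio $\frac{t+1}{t}$. The paper first establishes (Claims \ref{thm:40} and \ref{thm:41}) that all but a constant number of NE bins with load below $\frac{2t+1}{2(t+1)}$ contain exactly $t$ items of size in $(\frac{t-1}{t^2},\frac1t]$ (``$t$-items''), and then weights items by $\frac{2(t+1)}{2t+1}x$ plus, in those regular bins, a per-item surcharge topping the bin's weight up to $1$. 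Under this (or any comparable) weighting, an OPT bin packing $t+1$ such $t$-items---which is feasible, since $(t+1)\cdot\frac{t-1}{t^2}<1$---collects weight up to $\frac{t+1}{t}$, which is strictly larger than $\rho$ (e.g.\ $\frac32$ versus $\frac{22}{15}$ for $t=2$). So the uniform per-bin bound $\sum_{i\in O}w_i\le\rho$ that you propose to verify by ``a finite optimization over multisets of sizes'' is false for these extremal OPT bins; that optimization would return $\frac{t+1}{t}$ and you would recover only the trivial FF bound that the theorem is meant to improve.

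The idea you are missing is a \emph{global} counting argument across bins (the paper's Claim \ref{thm:50}): if $N_t$ OPT bins each contain $t+1$ $t$-items, then at most $(N_t-1)t$ of these $N_t(t+1)$ items can sit in $t$-tuples inside regular NE bins. The proof is inherently non-local: one sums the Nash ``does not fit'' inequalities $t_i+B_j>1$ over all $N_t^2$ pairs and contradicts the fact that these items fit into $N_t$ unit-capacity OPT bins. This forces roughly $\frac{N_t}{t(t-1)}$ further OPT bins to carry at most $t$ such items (weight at most $\frac{2t+3}{2t+1}<\rho$), and only an \emph{averaging} over heavy and light OPT bins in the proportion $t(t-1):1$ produces $\rho$. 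Your proposed local structural fact (a barely loaded NE bin forces the preceding bin to be very full) does not substitute for this global step; also, the load bound $(t-1)/t$ you derive from the Nash condition is weaker than the $t/(t+1)$ bound the paper imports from the First-Fit analysis and actually needs in Claim \ref{thm:41}.
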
 \begin{proof} Let us
consider a packing $b$ of the items in $N_G$ which admits
\textit{NE} conditions. We classify the bins according to their
loads into four groups-$\mathcal{A}$,
$\mathcal{B}$,$\mathcal{C}$ and $\mathcal{D}$. The cases $t=2$
and $t\geq 3$ are treated separately. For $t=2$: group
$\mathcal{A}$- contains bins with loads of more than
$\frac{5}{6}$; Group $\mathcal{B}$- contains bins with loads in
$(\frac{3}{4}, \frac{5}{6}]$; Group $\mathcal{C}$- contains
bins with loads in $(\frac{17}{24}, \frac{3}{4}]$; Group
$\mathcal{D}$- contains bins with loads not greater than
$\frac{17}{24}$. For $t\geq 3$: group $\mathcal{A}$- contains
bins with loads of more than $\frac{2t+1}{2(t+1)}$; Group
$\mathcal{B}$- contains bins with loads in  $(\frac{t+1}{t+2},
\frac{2t+1}{2(t+1)}]$; Group $\mathcal{C}$- contains bins with
loads in $(\frac{t^2-t+1}{t^2}, \frac{t+1}{t+2}]$; Group
$\mathcal{D}$- contains bins with loads not greater than
$\frac{t^2-t+1}{t^2}$. This partition is well defined, as
$\frac{t}{t+1}<\frac{t^2-t+1}{t^2}$,
$\frac{t^2-t+1}{t^2}<\frac{t+1}{t+2}$ and
$\frac{t+1}{t+2}<\frac{2t+1}{2(t+1)}$ for any $t\geq 3$. We
denote the cardinality of these groups by $n_\mathcal{A},
n_\mathcal{B}, n_\mathcal{C}$ and $n_\mathcal{D}$,
respectively. Hence,
$NE=n_\mathcal{A}+n_\mathcal{B}+n_\mathcal{C}+n_\mathcal{D}$.
We list the bins in each group from left to right in
non-increasing order w.r.t. their loads. Our purpose is to find
an upper bound on the total number of bins in these four
groups.

In the case $n_\mathcal{D}<3$, using the fact that $OPT\geq \sum_{i=1}^{n}s_i$
we consider two sub-cases:
\begin{itemize}
  \item[$\bullet$] For $t=2$, this means that all bins in packing $b$ (except for at most 2) have load of at least $\frac{17}{24}$, thus $OPT\geq \frac{17}{24} NE$, and $PoA\leq \frac{24}{17}<\frac{22}{15}$.

  \item[$\bullet$] For $t\geq 3$, this means that all bins in packing $b$ (except for at most 2) have load of at least $\frac{t^2-t+1}{t^2}$, thus $OPT\geq \frac{t^2-t+1}{t^2} NE$, and $PoA\leq \frac {t^2}{t^2-t+1}<\frac{2t^3+t^2+2}{(2t+1)(t^2-t+1)}$.
\end{itemize}

In the rest of the analysis we assume that $n_\mathcal{D}\geq 3$. We start with a simple lower bound on the load of the bins (except possibly at most two bins) in a \textit{NE} packing.

\begin{claim} \label{thm:40}
For a positive integer $t\geq2$, all the bins in NE packing $b$ (except for maybe a constant number of bins) are at least $\frac{t}{t+1}$ full.
\end{claim}
%\begin{proof}
%Consider the well-known First Fit algorithm (FF for short) for bin packing. FF packs each item in turn into the lowest indexed bin to where it fits. It opens a new bin only in the case where the item does not fit into any existing bin. It was shown in \cite{JohnsonDUGG74} that any bin (accept for maybe two) in the packing produced by FF is more than $\frac{t}{t+1}$ full for any $t\geq 2$. For each $N_G$ instance, it is possible to define (modulo reordering the items) an instance for which running the FF algorithm will produce exactly the packing $b$. So, as any \textit{NE} packing $b$ can be produced by a run of FF, it has all the properties of a FF packing, including the one mentioned above.
%\end{proof}
Moreover, the fact that any \textit{NE} packing can be produced by a run of FF actually implies that the worst-case asymptotic ratio of FF, which is known to be $\frac{t+1}{t}$ for $t\geq 2$, upper-bounds the \textit{PoA}. But, as we show further, the upper-bound we provide on the \textit{PoA} is tighter than this trivial bound for any $t\geq 2$.

From Claim \ref{thm:40} it is evident that all the bins (except for maybe two) in group $\mathcal{D}$ have loads in $(\frac{2}{3}, \frac{17}{24}]$ for $t=2$, or in $(\frac{t}{t+1}, \frac{t^2-t+1}{t^2}]$ for $t\geq 3$.

\begin{claim} \label{thm:41}
For a positive integer $t\geq2$, in a \textit{NE} packing $b$, all bins that are filled by less than $\frac{2t+1}{2(t+1)}$ (i.e. bins in groups $\mathcal{B}$, $\mathcal{C}$ and $\mathcal{D}$), except for maybe a constant number of bins, contain exactly $t$ items with sizes in $(\frac{t-1}{t^2}, \frac{1}{t}]$.
\end{claim}

Henceforth, we call the bins in groups $\mathcal{B}$, $\mathcal{C}$ and $\mathcal{D}$ that contain exactly $t$ items with sizes in $(\frac{t-1}{t^2}, \frac{1}{t}]$ for $t\geq 3$, or exactly $2$ items of sizes in $(\frac{7}{24}, \frac{1}{2}]$ for $t=2$ \textit{regular} bins, and refer to each one of those items as $t$-item.

To derive the upper bound on the total number of bins in the \textit{NE} packing $b$, we use the \textit{weighting functions} technique.

We define for each value of $t\geq 2 $ a weighting function $w_t$ on the items, in the following manner. The weight $w_t(x)$ of an item of size $x$ which is packed in a bin of group $\mathcal{A}$ in a packing $b$ is: $w_t(x)=\frac{2(t+1)}{2t+1}x$. The weight $w_t(x)$ of an item of size $x$ which is packed in a regular bin of load $L<\frac{2t+1}{2(t+1)}$ in a packing $b$ is: $w_t(x)=\frac{2(t+1)}{2t+1}x+ \frac{(1-\frac{2(t+1)}{2t+1} L)}{k}$, where $k$ is the number of items in the bin of $x$. The purpose of the addition term  $\frac{(1-\frac{2(t+1)}{2t+1} L)}{k}$ is to complete the weight of any bin in the packing to $1$. Clearly, any bin in group $\mathcal{A}$ (which is full by more than $\frac{2t+1}{2(t+1)}$) will have a total weight of at least 1. Any of the less filled bins from groups $\mathcal{B}$, $\mathcal{C}$ and $\mathcal{D}$ will have a weight of 1 as $\frac{2(t+1)}{2t+1}\cdot L+ \frac{(1-\frac{2(t+1)}{2t+1} L)}{t}\cdot t=1$, and each of the $t$ items packed in each one of these bins (except maybe 5 bins) will get an addition of at most $\frac{1-\frac{2(t+1)}{2t+1}\cdot \frac{t}{t+1}}{t}=\frac{1}{t(2t+1)}$.

For the 5 special bins, the first weighting function applies, and the weight of each bin is non-negative.
%As we study asymptotic measures, we will disregard these 5 special bins in the rest of the analysis, and the weight of these 5 bins will be taken into account in the additive constant.

%As we study asymptotic measures, we will disregard these 5 special bins in the rest of the analysis, and consider all the bins with load in $(\frac{t}{t+1} ,\frac{2t+1}{2(t+1)}]$ to have exactly $t$ items with sizes in $(\frac{t-1}{t^2}, \frac{1}{t}]$, and the weight of these 5 bins will be taken into account in the additive constant.

Now, we need to bound from above the weight observed by a bin in the optimal packing of these items. First, note that in a bin of the optimal packing for $t\geq 2$ there can be at most $t+1$ $t$-items from the regular bins of groups $\mathcal{B}$, $\mathcal{C}$ and $\mathcal{D}$. For $t=2$ the size of these items is greater than $\frac{7}{24}$, and the size of four of these items exceeds 1. For $t\geq 3$ the size of these items is greater than $\frac{t-1}{t^2}$, and the size of $t+2$ of these items, which is at least $(t+2)\cdot \frac{t-1}{t^2}=1+\frac{t-2}{t^2}$, also exceeds 1.

The weight of a bin in an optimal packing that has a load $S$ and contains $t+1$ $t$-items that come from bins of groups $\mathcal{B}$, $\mathcal{C}$ and $\mathcal{D}$ in $b$, is at most: \begin{displaymath} \frac{2(t+1)}{2t+1}\cdot S +(t+1)\cdot\frac{1}{t(2t+1)}\leq \frac{2(t+1)}{2t+1}+\frac{t+1}{t(2t+1)}=\frac{2t^2+3t+1}{t(2t+1)}=\frac{t+1}{t}.\end{displaymath}
The weight of a bin in an optimal packing that has a load $S$ and contains at most $t$ $t$-items that came from bins of groups $\mathcal{B}$, $\mathcal{C}$ and $\mathcal{D}$ in $b$, is at most: \begin{displaymath} \frac{2(t+1)}{2t+1}\cdot S +t\cdot\frac{1}{t(2t+1)}\leq \frac{2(t+1)}{2t+1}+\frac{t}{t(2t+1)}=\frac{2t^2+3t}{t(2t+1)}.\end{displaymath}

We claim that in any optimal packing, the fraction of the number of bins that contain $t+1$ $t$-items from bins of groups $\mathcal{B}$, $\mathcal{C}$ and $\mathcal{D}$ out of total number of bins is at most $\frac{t(t-1)}{t^2-t+1}$.

To establish this, we consider all the bins in the optimal packing that contain exactly $t+1$ $t$-items from groups $\mathcal{B}$, $\mathcal{C}$ and $\mathcal{D}$ (and maybe additional items as well), let the number of such bins be $N_t$.

If $N_t=0$, we are done as then the total weight of all the items in $N_G$ is at most $ W(N_G)\leq \big(\frac{2t+3}{2t+1}\big)\cdot OPT(G).$ %\begin{displaymath}
%\end{displaymath}
As $n_\mathcal{A}+n_\mathcal{B}+n_\mathcal{C}+n_\mathcal{D} -5\leq W(N_G)$,  we get that $NE\leq \big(\frac{2t+3}{2t+1}\big)\cdot OPT(G)+5<\big(\frac{2t^3+t^2+2}{(2t+1)(t^2-t+1)}\big)\cdot OPT(G)+5$.
Else, we prove the following claim.
\begin{claim} \label{thm:50}
Among the $N_t\cdot (t+1)$ $t$-items that are packed in $(t+1)$-tuples in the  bins of the optimal packing, only at most $(N_t-1)\cdot t$ are packed together in  $t$-tuples in bins that belong to groups $\mathcal{B}$, $\mathcal{C}$ and $\mathcal{D}$ in the $NE$ packing.
\end{claim}

Hence, at most $(N_t-1)\cdot t$ $t$-items out of $N_t\cdot (t+1)$ are packed together in  $t$-tuples in bins from groups $\mathcal{B}$, $\mathcal{C}$ and $\mathcal{D}$ in the $NE$ packing $b$.
The remaining $N_t+t$ $t$-items are also packed in bins of groups $\mathcal{B}$, $\mathcal{C}$ and $\mathcal{D}$ in $b$, but they share their bin with at most $(t-2)$ other $t$-items from the $N_t$ bins from the optimal packing, and at least one $t$-item that is not packed in one of these $N_t$ bins. In total, there are at least $\frac{N_t+t}{t-1}$  $t$-items that are not packed in one of the $N_t$ bins in discussion, and they are packed with at most $t-1$ other such items in the optimal packing.

Thus, in the optimal packing for any $N_t$ bins with $t+1$ items of size in $(\frac{t-1}{t^2}, \frac 1 t]$  there are
at least $\frac{N_t+t}{t(t-1)}$ bins that have at most $t$ such items. Letting $N_t$ be very large in comparison to $t$ gives us the claimed proportions. We conclude that in average, the weight of any bin of the optimal packing is at most: \begin{displaymath}
\frac{t(t-1)\cdot \frac{t+1}{t}+\frac{2t+3}{2t+1}}{t(t-1)+1}=\frac{2t^3+t^2+2}{(2t+1)(t^2-t+1)}.
\end{displaymath}
Hence, the total weight of all the items in $N_G$ is at most $W(N_G)\leq \big(\frac{2t^3+t^2+2}{(2t+1)(t^2-t+1)}\big)\cdot OPT(G).$
%\begin{displaymath}
%\end{displaymath}
As $n_\mathcal{A}+n_\mathcal{B}+n_\mathcal{C}+n_\mathcal{D} -5\leq W(N_G)$,  we get that $NE\leq \big(\frac{2t^3+t^2+2}{(2t+1)(t^2-t+1)}\big)\cdot OPT(G)+5$
\end{proof}

A more careful consideration of the contents of special bins allows to reduce the additive constant to 2.

\begin{theorem} \label{thm:37} For each integer $t \geq 2$ and $\alpha \in
  (\frac{1}{t+1}, \frac 1 t]$, the \textit{PoA} of the parametric bin packing
  game $BP(\alpha)$ is at most $\frac{2t^3+t^2+2}{(2t+1)(t^2-t+1)}$.
\end{theorem}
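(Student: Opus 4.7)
The strategy is to revisit the proof of Theorem~\ref{thm:36} and tighten the analysis of the ``special bins'' that inflated the additive constant to $5$. Since the \textit{PoA} is an asymptotic measure, it suffices to prove a bound of the form $NE(G) \leq \frac{2t^3+t^2+2}{(2t+1)(t^2-t+1)} \cdot OPT(G) + c$ for any constant $c$, and the claim follows by taking $|N_G| \rightarrow \infty$. The goal of the refinement is to show $c = 2$ works, but in fact any constant suffices for the asymptotic statement.

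The plan is to audit the places in the proof of Theorem~\ref{thm:36} where the constant $5$ appears. It originates from (i) the ``maybe a constant number of bins'' exceptions in Claim~\ref{thm:40}, (ii) the ``maybe a constant number of bins'' exceptions in Claim~\ref{thm:41}, and (iii) possible slack in the weighting bookkeeping when a bin fails either regularity condition. First I would sharpen Claim~\ref{thm:40}: if two distinct bins $B_1, B_2$ both had load less than $\frac{t}{t+1}$, then each contains an item of size at most $\frac{1}{t(t+1)} \cdot t = \frac{1}{t+1}$ in some sense, and in particular one of the two bins has residual capacity exceeding the smallest item of the other. Then that item could strictly improve by migrating to the fuller bin, contradicting the Nash condition; hence at most one bin has load below $\frac{t}{t+1}$. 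Similarly, I would sharpen Claim~\ref{thm:41}: using the fact that all items are of size at most $\frac{1}{t}$, if two distinct bins in $\mathcal{B}\cup\mathcal{C}\cup\mathcal{D}$ were non-regular, either an item in one could relocate to strictly improve its share, or the bins could be merged into something of total load more than $1$ (impossible) --- so at most one such exceptional bin exists.

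Once we have shown that the number of exceptional (``special'') bins is at most $2$, we apply exactly the weighting function $w_t$ from the proof of Theorem~\ref{thm:36}. Every bin in $\mathcal{A}$ and every regular bin in $\mathcal{B}\cup\mathcal{C}\cup\mathcal{D}$ receives total weight at least $1$, while each of the at most two special bins receives non-negative weight. Therefore
\[
n_\mathcal{A} + n_\mathcal{B} + n_\mathcal{C} + n_\mathcal{D} - 2 \;\leq\; W(N_G),
\]
and the upper bound on $W(N_G)$ derived from the analysis of an optimal bin (built around the $\frac{t(t-1)}{t^2-t+1}$ fraction of bins containing $(t+1)$-tuples of $t$-items) is unchanged. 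Combining these two bounds yields the desired inequality with additive constant $2$, which suffices for the asymptotic \textit{PoA} statement.

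The main obstacle will be carefully handling the interplay between the two strengthened claims: a bin may simultaneously violate the load threshold of Claim~\ref{thm:40} and the structural condition of Claim~\ref{thm:41}, so the Nash-based migration arguments must be formulated with respect to ``non-regular'' bins overall rather than each type separately, to avoid double-counting. A small additional subtlety is to verify that the exchange arguments remain valid when items from regular bins in $\mathcal{B}\cup\mathcal{C}\cup\mathcal{D}$ are candidates for migration into the exceptional bin, which is where the tight interval $(\frac{t-1}{t^2}, \frac{1}{t}]$ on $t$-item sizes must be used crucially.
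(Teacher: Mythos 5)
Your first paragraph already contains the paper's entire proof: Theorem~\ref{thm:36} gives $NE \leq \frac{2t^3+t^2+2}{(2t+1)(t^2-t+1)}\cdot OPT(G) + 5$ for every \textit{NE} packing, and since the \textit{PoA} is defined asymptotically the additive constant is irrelevant, so the bound follows directly. The lengthy attempt to reduce the constant from $5$ to $2$ is unnecessary for this statement (and its sketched migration arguments are not rigorous as written), but since the conclusion does not depend on it, your proof is correct and takes essentially the same route as the paper's.
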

\begin{proof}
The asserted upper bound on the $PoA$ follows directly from Theorem \ref{thm:36}.
\end{proof}

%\begin{comment}
%Note that in our construction of the lower bound for $t\geq 2$ there
%are also two types of bins in the optimal packing: bins that contain
%exactly $t+1$ items of sizes in $(\frac{t-1}{t^2},\frac 1 t]$, and
%bins that contain at most $t$ such items. These items are packed in
%$t$-tuples in the $NE$ packing of the instance. The proportion between
%the numbers of these two types of bins in the optimal solution is $t(t-1):1$,
%similarly to our conclusion in the analysis of the upper bound.
%\end{comment}
%\noindent We conjecture that the true bound on the $PoA$ is equal to the
%lower bound which we provide.

\begin{figure}[t]
  \centering

\hspace{0.3cm}
  \subfigure[\label{fig:ratio-comparison} A comparison of our analysis of
  $R_{SS}^\infty(\alpha)$ with Caprara and Pferschy's (CP). The true ratio
  lies between the previously known upper and lower
  bounds.]{\includegraphics[angle=90,width=0.41\textwidth]{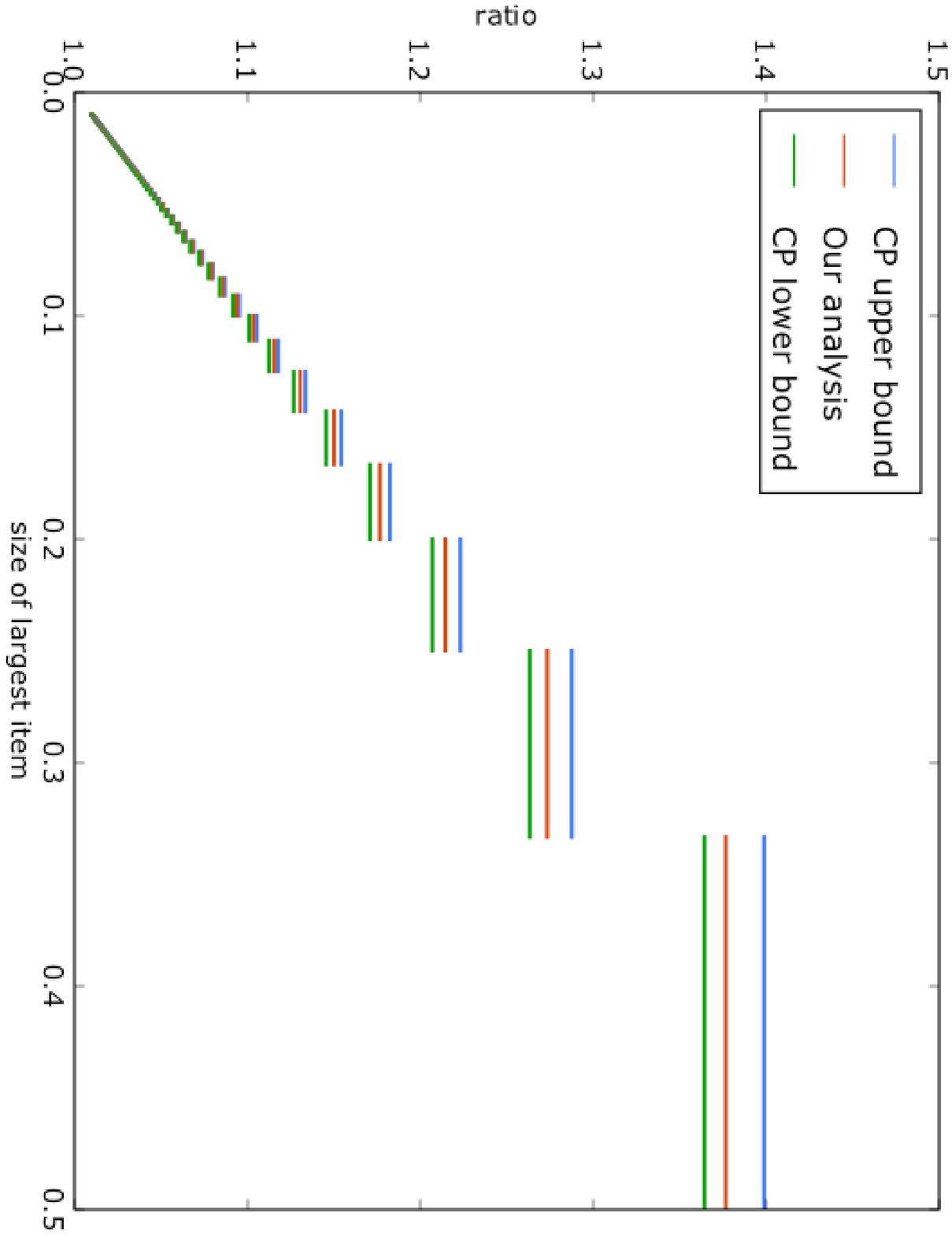}}
  \hspace{2em} \subfigure[\label{fig:poa-comparison} Almost matching upper and
  lower bounds for the PoA of the parametric bin packing
  game. ]{\includegraphics[angle=90,width=0.41\textwidth]{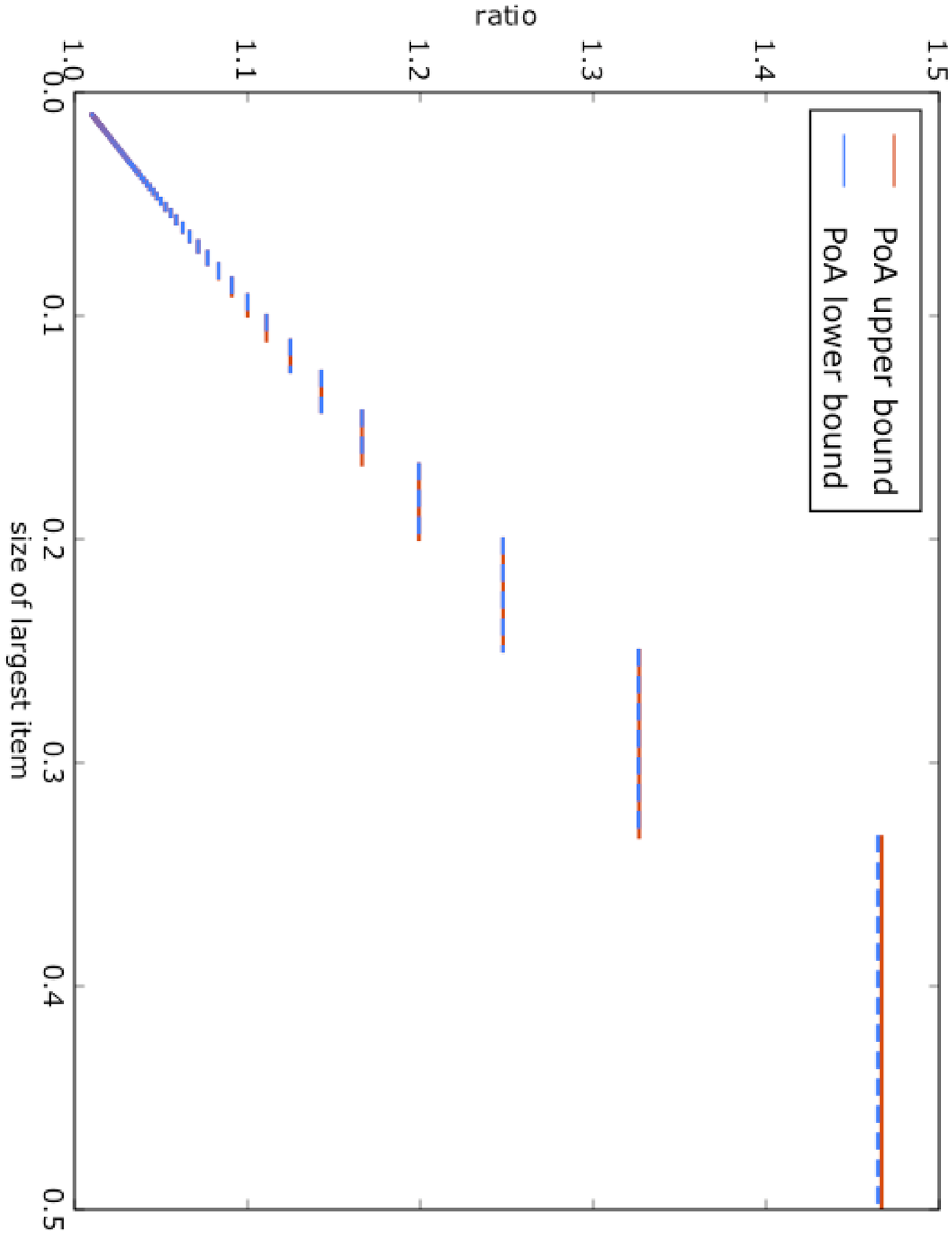}}

  \subtable[Comparison of the worst-case ratio of \textit{FFD}, \textit{SS},
  \textit{FF} and \textit{PoA} as a function of $\alpha$ when
  $\alpha\leq \frac{1}{t}$, for $t  = 1, \ldots, 10$.]{ \footnotesize
    \begin{tabular}{c||c|c|c|c|c|c}
      \hline
      &$R_{\mathit{FFD}}(\alpha)$ \cite{JohnsonDUGG74} & CP lb \cite{CapraraP04}&
      $R_{\mathit{SS}}(\alpha)$ & CP ub  \cite{CapraraP04}& $PoA(\alpha)$ & $R_{\mathit{FF}}(\alpha)$ \cite{JohnsonDUGG74}\\
      \hline \hline
      $t=1$ & 1.222222 & 1.606695 \cite{Graham72}& 1.606695 & 1.621015 & [1.641632, 1.642857] \cite{EpsteinK08}& 1.700000\\
      $t=2$ & 1.183333 & 1.364307 & 1.376643 & 1.398793 & [1.464571, 1.466667] & 1.500000 \\
      $t=3$ & 1.166667 & 1.263293 & 1.273361 & 1.287682 & [1.326180, 1.326530] & 1.333333 \\
      $t=4$ & 1.150000 & 1.206935 & 1.214594 & 1.223143 & [1.247771, 1.247863] & 1.250000 \\
      $t=5$ & 1.138095 & 1.170745 & 1.176643 & 1.182321 & [1.199102, 1.199134] & 1.200000 \\
      $t=6$ & 1.119048 & 1.145460 & 1.150106 & 1.154150 & [1.166239, 1.166253] & 1.166667 \\
      $t=7$ & 1.109127 & 1.126763 & 1.130504 & 1.133531 & [1.142629, 1.142635] & 1.142857 \\
      $t=8$ & 1.097222 & 1.112360 & 1.115433 & 1.117783 & [1.124867, 1.124871] & 1.125000 \\
      $t=9$ & 1.089899 & 1.100918 & 1.103483 & 1.105360 & [1.111029, 1.111031] & 1.111111 \\
      $t=10$ & 1.081818 & 1.091603 & 1.093776 & 1.095310 & [1.099946, 1.099947] & 1.100000 \\
      \hline
\end{tabular} \label{table:results}
}

   \caption{Our results at a glance.}

\end{figure}

\section{Concluding Remarks}

In order to illustrate the results in the paper, we report in Figure
\ref{table:results} the values for the worst-case ratio of the SS algorithm
for various values of $\alpha$ along with previously known upper and lower
bounds of Caprara and Pferschy \cite{CapraraP04}, and the worst-case
approximation ratios of FF and FFD algorithm Bin Packing. We also include the
range of possible values for the PoA for different values of
$\alpha$. Figure~\ref{fig:poa-comparison} shows our (almost matching) upper
and lower bound on the PoA.  We conjecture that the true value of the $PoA$
equals our lower bound from Theorem~\ref{thm:20}.

%\begin{spacing}{0.9}
small
\bibliographystyle{abbrv}
%\bibliography{references}

\begin{thebibliography}{10}

\bibitem{AFM07}
N.~Andelman, M.~Feldman, and Y.~Mansour.
\newblock Strong price of anarchy.
\newblock In {\em SODA}, pages 189--198, 2007.

\bibitem{Bilo06}
V.~Bil\`{o}.
\newblock On the packing of selfish items.
\newblock In {\em IPDPS}. IEEE, 2006.

\bibitem{CapraraP04}
A.~Caprara and U.~Pferschy.
\newblock Worst-case analysis of the subset sum algorithm for bin packing.
\newblock {\em Oper. Res. Lett.}, 32(2):159--166, 2004.

\bibitem{G_CC_06}
E.~G. {Coffman {Jr.}} and J.~Csirik.
\newblock Performance guarantees for one-dimensional bin packing.
\newblock In T.~F. Gonzalez, editor, {\em Handbook of Approximation Algorithms
  and Metaheuristics}, chapter~32. Chapman \& Hall/Crc, 2007.
\newblock 18 pages.

\bibitem{CoGaJo97}
E.~G. {Coffman, Jr.}, M.~R. Garey, and D.~S. Johnson.
\newblock Approximation algorithms for bin packing: {A} survey.
\newblock In D.~S. Hochbaum, editor, {\em Approximation algorithms}. PWS
  Publishing Company, 1997.

\bibitem{G_CCL_1_06}
J.~Csirik and J.~Y.-T. Leung.
\newblock Variants of classical one-dimensional bin packing.
\newblock In T.~F. Gonzalez, editor, {\em Handbook of Approximation Algorithms
  and Metaheuristics}, chapter~33. Chapman \& Hall/Crc, 2007.
\newblock 13 pages.

\bibitem{CV07}
A.~Czumaj and B.~V{\"o}cking.
\newblock Tight bounds for worst-case equilibria.
\newblock {\em ACM Transactions on Algorithms}, 3(1), 2007.

\bibitem{EpsteinK08}
L.~Epstein and E.~Kleiman.
\newblock Selfish bin packing.
\newblock In {\em ESA}, pages 368--380, 2008.

\bibitem{FKLO07}
A.~Fiat, H.~Kaplan, M.~Levy, and S.~Olonetsky.
\newblock Strong price of anarchy for machine load balancing.
\newblock In {\em ICALP2007}, pages 583--594, 2007.

\bibitem{Graham72}
R.~L. Graham.
\newblock Bounds on multiprocessing anomalies and related packing algorithms.
\newblock In {\em Proceedings of the 1972 Spring Joint Computer Conference},
  pages 205–--217, 1972.

\bibitem{conf/stacs/ImmorlicaMM05}
N.~Immorlica, M.~Mahdian, and V.~S. Mirrokni.
\newblock Cycle cover with short cycles.
\newblock In {\em Proceedings of the 22th Annual Symposium on Theoretical
  Aspects of Computer Science}, pages 641--653, 2005.

\bibitem{journals/jacm/JainMMSV03}
K.~Jain, M.~Mahdian, E.~Markakis, A.~Saberi, and V.~V. Vazirani.
\newblock Greedy facility location algorithms analyzed using dual fitting with
  factor-revealing {LP}.
\newblock {\em Journal of the ACM}, 50(6):795--824, 2003.

\bibitem{JohnsonDUGG74}
D.~S. Johnson, A.~J. Demers, J.~D. Ullman, M.~R. Garey, and R.~L.
Graham.
\newblock Worst-case performance bounds for simple one-dimensional packing
  algorithms.
\newblock {\em SIAM J. Comput.}, 3(4):299--325, 1974.

\bibitem{KP99}
E.~Koutsoupias and C.~H. Papadimitriou.
\newblock Worst-case equilibria.
\newblock In {\em STACS'99}, pages 404--413, 1999.

\bibitem{LeeLee85}
C.~C. Lee and D.~T. Lee.
\newblock A simple online bin packing algorithm.
\newblock {\em J.~ACM}, 32:562--572, 1985.

\bibitem{MS01}
M.~Mavronicolas and P.~G. Spirakis.
\newblock The price of selfish routing.
\newblock In {\em STOC2001}, pages 510--519, 2001.

\bibitem{Rough}
T.~Roughgarden.
\newblock {\em Selfish routing and the price of anarchy}.
\newblock MIT Press, 2005.

\bibitem{RoughgardenT00}
T.Roughgarden and {\'E}.~Tardos.
\newblock How bad is selfish routing?
\newblock In {\em FOCS}, pages 93--102, 2000.

\bibitem{Ullman71}
J.~D. Ullman.
\newblock The performance of a memory allocation algorithm.
\newblock Technical Report 100, Princeton University, Princeton, NJ, 1971.

\bibitem{YuZ08}
G.~Y. and G.~Zhang.
\newblock Bin packing of selfish items.
\newblock In {\em WINE}, pages 446--453, 2008.

\end{thebibliography}

%\end{spacing}
%\begin{table}[h]
%\caption{Comparison of the worst-case ratio of \textit{FFD},
% \textit{SS}, \textit{FF} and \textit{PoA} as a function of
%$\alpha$, $\frac{1}{t+1}<\alpha\leq \frac{1}{t}$, for
%$t=1,\ldots,10$} \centering
%\begin{tabular}{c||c|c|c|c|c|c}
%\hline
%&$R_{FFD}(\alpha)$ & CP lb \cite{CapraraP04}& $R_{SS}(\alpha)$ & CP ub  \cite{CapraraP04}& $PoA(\alpha)$ & $R_{FF}(\alpha)$\\
%\hline \hline
%$t=1$ & 1.222222 & 1.606695 \cite{Graham72}& 1.606695 & 1.621015 & [1.641632, 1.642857] \cite{EpsteinK08}& 1.700000\\
%$t=2$ & 1.183333 & 1.364307 & 1.376643 & 1.398793 & [1.464571, 1.466667] & 1.500000 \\
%$t=3$ & 1.166667 & 1.263293 & 1.273361 & 1.287682 & [1.326180, 1.326530] & 1.333333 \\
%$t=4$ & 1.150000 & 1.206935 & 1.214594 & 1.223143 & [1.247771, 1.247863] & 1.250000 \\
%$t=5$ & 1.138095 & 1.170745 & 1.176643 & 1.182321 & [1.199102, 1.199134] & 1.200000 \\
%$t=6$ & 1.119048 & 1.145460 & 1.150106 & 1.154150 & [1.166239, 1.166253] & 1.166667 \\
%$t=7$ & 1.109127 & 1.126763 & 1.130504 & 1.133531 & [1.142629, 1.142635] & 1.142857 \\
%$t=8$ & 1.097222 & 1.112360 & 1.115433 & 1.117783 & [1.124867, 1.124871] & 1.125000 \\
%$t=9$ & 1.089899 & 1.100918 & 1.103483 & 1.105360 & [1.111029, 1.111031] & 1.111111 \\
%$t=10$ & 1.081818 & 1.091603 & 1.093776 & 1.095310 & [1.099946, 1.099947] & 1.100000 \\
%\hline
%\end{tabular}
%\label{table:results}
%\end{table}

\appendix
\section{Omitted proofs}
\subsection{Proof of Lemma \ref{lem:optimal-solution}}
Let $s$ be a solution to \eqref{LP:factor} other than $s^*$. The plan is to
show that $s$ is not optimal by improving its cost. First we argue that
without loss of generality $\sum_{i = 1}^r s_i = 1$. Indeed, if that was not
the case then consider the new solution
  \begin{equation*}
    s'_i =
    \begin{cases}
      s_i & \text{if } i < r, \\
      1- \sum_{j=1}^{r-1} s_j & \text{if } i = r.
    \end{cases}
  \end{equation*}
  The difference between the value of $s'$ and the value of $s$ comes from the
  $r$th term in the objective of \eqref{LP:factor}. Since $s'_r > s_r$, this
  difference is at least
  \[ s'_r - \frac{s_r}{1-s'_r + s_r} > s'_r - \frac{s'_r}{1-s'_r + s'_r} = 0.\]

  Let $i$ be the first index such that $s_i \neq s^*_i$. First, we consider
  the case $s_i < s^*_i$. Let $h > i$ be the smallest index such that
  $\sum_{j=1}^h s_j \geq 1- \min_{1\leq j \leq h} s_j$. (Note that such $h$
  must exist because the condition is satisfied by $r$ and $i<r$ by our
  assumption that $\sum_i s_i = 1$.)  We construct a new solution $s'$ from
  $s$ by slightly increasing $s_i$ and slightly decreasing $s_{h}$ by the same
  $\varepsilon$ amount (note that $s_h$ must be non-zero). We would like to
  argue that the overall change in value is positive. To that end, we examine
  how each term in the objective of \eqref{LP:factor} changes with the update.

  \begin{itemize}
    \renewcommand{\labelitemi}{$\triangleright$}
    \item
  For $k \in [1, i)$ the contribution of the $k$th term is not affected by the
  update since its value does not depend on $s_i$ or $s_h$.

  \item For $k \in (i, h)$, the $k$th term can only increase. Indeed, for
  small enough $\varepsilon$ and for all $s'_i \in [s_i, \, s_i +
  \varepsilon]$ we have $\sum_{j=1}^k s'_j < 1 - \min_{1 \leq j \leq h } s'_j$
  and thus the contribution of the $k$th term to the value of $s'$ is
  \[\frac{s'_k}{1-\min_{1\leq j\leq k} s'_j} \geq \frac{s_k}{1-\min_{1\leq j
      \leq k} s_j},\] which in turn is its contribution to the value of $s$.

  \item
  For $k \in (h, r]$, the $k$th term does not change with the update because,
  since $\sum_{j=1}^k s'_j = \sum_{j=1}^k s_j$ and $\sum_{j=1}^h s_j \geq 1-
  \min_{1\leq j \leq h} s_j$, its contribution is always
  \[\frac{s_k} {\max\{\sum_{j=1}^k s_k,\, 1-\min_{h< j \leq k} s_j\}}.\]

  \item Regarding the $i$th term, for any $s'_i \in [s_i, s^*_i]$ we have
  $\sum_{j=1}^i s_j < 1 - \min_{1\leq j \leq i} s'_j = 1- s'_i$. Thus its
  contribution to the value of $s$ is $\frac{s'_i}{1-s'_i}$. Imagine
  increasing $s'_i$ continuously from $s_i$ to $s_i + \varepsilon$. The rate of
  change of its contribution to the value as a function of $s'_i$ is
  \[\frac{\partial}{\partial x}  \left(\frac{x}{1-x}\right)_{x =
    s'_i}\hspace{-1ex} =\ \, \frac{1}{(1-s'_i)^2}.\]

  \item Since the $h$th term decreases with the update, we need to show that
  its rate of change, as we decrease $s'_h$ from $s_h$ to $s_h - \varepsilon$, does
  not cancel out the rate of change of the $i$th term. Suppose $\sum_{j=1}^h
  s_j > 1-s_h$ then  its rate of change is
  \[-\, \frac{\partial}{\partial x} \left( \frac{x}{\sum_{j=1}^h s_j}
  \right)_{x = s'_h} \hspace{-1ex} =\ - \, \frac{1}{\sum_{j=1}^h s_j} \geq - \,
  \frac{1}{1-s_i} > - \, \frac{1}{(1-s_i)^2}.\] Let us consider what happens when $\sum_{j=1}^h s_j =
  1-s_h$. In this case $s_h \leq s_i$ since
  \(1-s_h = \sum_{j = 1}^h s_j \geq 1- \min_{1\leq j \leq h} s_j \geq 1-
  s_i.\)
  Thus, the rate of change of
  the $h$th term is
  \[-\, \frac{\partial}{\partial x} \left( \frac{x}{1-x} \right)_{x = s'_h}
  \hspace{-1ex} =\ - \, \frac{1}{(1-s'_h)^2}\ \geq\ - \frac{1}{(1-s_i)^2} .\]
  \end{itemize}

  We claim that for small enough $\varepsilon$, the value of $s'$ must be strictly greater
  than the value of $s$. Indeed, by the discussion above, the overall change in value
  is at least
  \[\int_{s_i}^{s_i + \varepsilon} \frac{1}{(1- s'_i)^2} - \frac{1}{(1-s_i)^2}\ \mathrm{d} s'_i > 0.\]

  Now let us see what happens when $s_i > s^*_i$. In this case we build our
  new solution $s'$ by decreasing $s_i$ and increasing $s_r$ by the same
  infinitesimally small amount $\varepsilon$. As before, terms before the $i$th
  do not depend on $s_i$ or $s_r$ and therefore are not affected by the
  update. Since $s_i > \sum_{j=i+1}^r s_j > s_{i+1}$ we have $\min_{1\leq j
    \leq k} s'_j = \min_{1\leq j \leq k } s_j$ for $k \in (i, r)$. Therefore,
  in this case, the $k$th term can only increase
  \[ \frac{s'_k}{\max\{\sum_{j = 1}^k s'_j, \, 1 - \min_{1\leq j \leq k }
    s'_j\}} \geq \frac{s_k}{\max\{\sum_{j = 1}^k s_j, \, 1 - \min_{1\leq j \leq k }
    s_j\}}. \]
  The $i$th term decreases and its rate of change is
  \[- \, \frac{\partial}{\partial x} \left(\frac{x}{1-2^{-i+1} + x}\right)_{x
    = s_i}\hspace{-1ex} = \ -\, \frac{1-2^{-i+1}}{(1-2^{-i+1} + s_i)^2} > -\,
  \frac{1-2^{-i+1}}{(1-2^{-i})^2} > - \, 1.\]
  On the other hand, the $r$th term increases and its rate of change is~1 due
  to our assumption that $\sum_i s_i = 1$. Therefore, the overall rate
  of change of value is strictly positive.

\subsection{Proof of Lemma \ref{lem:optimal-solution-param}}

  The plan is to show that given some solution $s$, either there exists $x \in
  \left[\frac{1}{t+1}, \frac{1}{t}\right]$ such that the solution $s^*$
  induced by $x$ equals $s$, or we can construct another solution $s'$ that is
  closer to $s^*$ and has value at least as large as $s$. This process is
  repeated until we converge to $s^*$.

  First, if $\sum_{j=1}^r s_j < 1$ then we can safely increase $s_r$ to until
  the bin is full. Note that we can always do this because there is no upper
  bound on $s_r$. From now on we assume that $\sum_{j=1}^r s_j = 1$.

  Suppose there exists $s_i < \frac{1}{t+1}$ for some $i < t$ and let $i$ be
  the smallest such index. Let $h$ be the smallest index such that
  $\sum_{j=1}^j s_j \geq \max\{1- \min_{1\leq j \leq h} s_j,
  \frac{t}{t+1}\}$. As was done in the proof of
  Lemma~\ref{lem:optimal-solution}, we increase $s_i$ and decrease $s_h$ by
  the same $\varepsilon$ amount. The same argument used before shows that the
  value of $s'$ is greater than the value of $s$. Therefore, we can assume
  that $s_i \geq \frac{1}{t+1}$ for all $i < t$. Under this assumption, each
  item contributes $s_i \frac{t}{t+1}$ to the objective, since \[1-\min_{1\leq
    j\leq i} s_i \leq 1- \frac{1}{t+1} = \frac{t}{t+1} \text{ and }\sum_{1
    \leq j \leq i} s_j \leq \frac{t-1}{t} < \frac{t}{t+1}.\] Setting $s'_i =
  \frac{\sum_{1\leq j < t} s_j}{t-1}$ for each $i \in [t-1]$ does not affect
  the contribution of these items and can only increase the contribution of
  the remaining items since the transformation does not change the total size,
  but may increase the minimum size of the first $t-1$ items. Therefore, we
  can assume that $s_1 = \cdots = s_{t-1} = x$ for some $x \in [\frac{1}{t+1},
  \frac{1}{t}]$.

  At this point, we can apply the exact same argument as the one used in the
  proof of Lemma~\ref{lem:optimal-solution}. For $i = t$ we note that if $s_t
  < s^*_t$ then for any $s'_i \in [s_i, s_i + \varepsilon]$ we have $1- s'_i >
  1- \frac{1-x\,(t-1)}{2} \geq \frac{t}{t+1},$ where the last inequality uses
  $x\geq \frac{1}{t+1}$, and $1-s'_i = 1- 2 s'_i + s'_i > x\, (t-1) + s'_i =
  \sum_{j = 1}^t s'_i.$ Therefore, the contribution of the $t$th term is
  $\frac{s'_t}{1-s'_t}$.  Similarly, if $s_t > s^*_t$ then the contribution is
  $\frac{s'_t}{\sum_{j=1}^t s'_j}$. These are the properties needed to apply
  the argument used before. The conclusion is that for all $t\leq i< r$ the
  value of the program is maximized by setting $s_i$ to $\frac{1 -
    \sum_{j=1}^{i-1} s_j}{2}$.

\subsection{Proof of Lemma \ref{lem:maximum-x}}

Consider the variable change $y = \frac{1}{1-(t-1) x}$, which for $t \geq 2$
  maps the range $[\frac1{t+1},\frac1t]$ for $x$ into the range $[\frac{t+1}2,
  t]$ for $y$:
\[\lambda^t(x) = g(y) = (1-1/y) \frac{t+1}{t} + \sum_{i=1}^{\infty} \frac{1}{y \, 2^i
  -1}.\]
This function and its derivative converge uniformly for $y$ in $[\frac{t+1}2,
t]$. Thus, the first derivative of $g$ can be obtained by term-wise
differentiation
\[g'(y) = \frac{t+1}{t \, y^{2}} - \sum_{i = 1}^{\infty} \frac{2^{i}}{(y\, 2^i
  - 1)^2} = \frac{1}{y^2} \left( \frac{t+1}{t} - \sum_{i =1}^\infty
  \frac{2^i}{(2^i - 1/y)^2} \right).\]
Notice that each term of the infinite sum, and thus the sum itself, is a
decreasing function of $y$ for $y\geq 1$. It follows that either the sign of
$g'$ is the same throughout the interval $[\frac{t+1}{2}, t]$ or it changes
from negative to positive. In either case, the maximum must be attained at one
of the ends of the interval. Hence, the maximum of $\lambda^t(x)$ in the
domain $[\frac{1}{t+1}, \frac1{t}]$ is attained either at $\frac{1}{t+1}$ or
at $\frac{1}{t}$.

We claim that $\lambda^t(\frac{1}{t+1}) > \lambda^t(\frac{1}{t})$ for all $t \geq 2$. Indeed, taking the difference of
these two values we get \begin{equation*}
  \lambda^t\left(\frac{1}{t+1}\right) - \lambda^t\left(\frac{1}{t}\right)
  = \frac{1}{t^2} - \sum_{i=1}^\infty \frac{1}{2^i\, t^2 +  (2^i-2)\, t
    - 1
    + 2^{-i}}
\end{equation*}

If the denominator of each term of the infinite sum were larger than $2^i t^2$
then it would immediately follow that the right hand side is always
positive. Unfortunately, this is not true for the first term. Nevertheless, it
is true for the remaining terms, and the first and second terms together are
less than $\frac{3}{4t^2}$. Therefore,
\begin{equation*}
  \lambda^t\left(\frac{1}{t+1}\right) - \lambda^t\left(\frac{1}{t}\right) > \frac{1}{t^2} -
  \sum_{i=1}^\infty \frac{1}{2^i\, t^2 } = 0
\end{equation*}

\subsection{Proof of Theorem \ref{thm:20}} \label{app:proof-thm-20}

Let $s>2$ be an integer. We define a construction with $s+1$ phases of indices $0 \leq j \leq s$, where the items of phase $j$ have sizes which are close to $\frac{1}{(t+1)\cdot 2^j}$, but can be slightly smaller or slightly larger than this value.
We let $OPT=t(t-1)\cdot n +n$, and assume that $n$ is a large enough integer, such that $n>2^{s^3}$, $n>>t$. We use a sequence of small values, $\delta_j$ such that $\delta_j=\frac{1}{(4n)^{3s-2j}}$. Note that this implies $\delta_{j+1}=(4n)^2\delta_j$ for $0\leq j\leq s-1$. For each $t\geq 2$, $t\in\mathbb{N}$ we use two sequences of positive integers $r^t_j\leq n$ and $d^t_j\leq n$, for $2 \leq j \leq s$, and in addition, $r^t_0=n$,\;$d^t_0=0$ and $r^t_1=\frac{n}{t+1}$,\;$d^t_1=\frac{t}{t+1}n$ (and thus $r^t_1+d^t_1=n$). We define $r^t_{j+1}=\frac{r^t_j-1}{(t+1)\cdot 2^{j-1}}$ and $d^t_{j+1}=r^t_{j}-r^t_{j+1}=\frac{((t+1)\cdot 2^{j-1}-1)r^t_j+1}{(t+1)\cdot 2^{j-1}}=((t+1)\cdot2^{j-1}-1)r^t_{j+1}+1$, for $1\leq j \leq s-1$.

\begin{observation}    \label{thm:21}
For each $1\leq j\leq s$, $\frac{n}{(t+1)^j\cdot 2^{j(j-1)/2}}-1\leq  r^t_j \leq \frac{n}{(t+1)^j\cdot 2^{j(j-1)/2}}$.
\end{observation}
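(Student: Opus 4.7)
The plan is to prove both inequalities simultaneously by induction on $j$, reading them off directly from the recursive definition of $r^t_j$.

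The base case $j=1$ is immediate: by definition $r^t_1 = n/(t+1)$, which coincides with the claimed upper bound $n/\bigl((t+1)^1 \cdot 2^{0}\bigr)$ and exceeds the claimed lower bound by $1$.

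For the inductive step, assume the bounds hold at index $j$. The recurrence is of the form $r^t_{j+1} = (r^t_j - 1)/D_j$ for a positive integer denominator $D_j$, and the map $x \mapsto (x-1)/D_j$ is monotone in $x$, so I substitute each endpoint of the inductive interval. Substituting the IH upper endpoint and noting that the subtraction only helps in this direction gives the target upper bound at index $j+1$, provided the exponents of $2$ and $(t+1)$ in the denominator transform correctly. Substituting the IH lower endpoint produces the same main term together with an additive correction of magnitude at most $2/D_j$ (one ``$-1$'' inherited from the IH, one fresh ``$-1$'' from the recursion); since $D_j \geq 2$, this correction is at most $1$, matching the ``$-1$'' allowed in the claimed lower bound at index $j+1$. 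Crucially the error does not compound across iterations: each previous unit of additive loss is divided by $D_j$ on the next step, so the accumulated correction stays bounded by a constant.

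The only step requiring any care is the arithmetic bookkeeping on exponents, namely verifying that the power of $2$ contributed by $D_j$ combines with $j(j-1)/2$ from the IH to produce exactly $(j+1)j/2$ at index $j+1$, and that the analogous claim holds for the $(t+1)$ factor. This is a single line of algebra on the binomial coefficients $\binom{j}{2}$ and $\binom{j+1}{2}$; I do not anticipate any genuine obstacle beyond this routine check.
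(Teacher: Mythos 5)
Your overall strategy is the same as the paper's: the paper unrolls the recurrence $r^t_{j+1}=(r^t_j-1)/D_j$ into a closed form and bounds the accumulated correction by a convergent sum, whereas you keep the interval $[U_j-1,\,U_j]$ (with $U_j = n/((t+1)^j\, 2^{j(j-1)/2})$) and observe that each step contracts the inherited loss by a factor $D_j\geq 2$, so the two units of additive error become at most one unit at the next index. That part of your argument is correct and, if anything, cleaner than the paper's treatment of the error term.

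The difficulty sits precisely in the step you dismiss as routine. With the recurrence as printed, $D_j=(t+1)\cdot 2^{j-1}$, so $U_j/D_j = n/\bigl((t+1)^{j+1}\, 2^{j(j-1)/2+j-1}\bigr)$, and $j(j-1)/2+j-1 \neq j(j+1)/2$: the exponent bookkeeping does not close, and in fact the observation is false for that recurrence already at $j=2$, since $r^t_2 = n/(t+1)^2 - 1/(t+1)$, which exceeds the claimed bound $n/(2(t+1)^2)$ for large $n$. The recurrence the construction actually needs is $r^t_{j+1}=(r^t_j-1)/((t+1)\cdot 2^{j})$ --- this is what is forced by the identity $d^t_j=((t+1)2^{j-1}-1)r^t_j+1$ used later to fill the phase-$j$ bins --- and with that $D_j$ your check passes ($j(j-1)/2 + j = j(j+1)/2$) and your induction is complete. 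So the ``single line of algebra'' should not be waved off: it is the one place where the statement, the printed definition, and the paper's own derivation (which likewise ends with the wrong power of $2$, namely $r^t_{j+1}<n/((t+1)^j 2^{j(j-1)/2})$) must be reconciled, and actually performing it is what exposes the off-by-one in the exponent.
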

\begin{proof}
For $j=1$ it holds by definition. We next prove the property for $r^t_{j+1}$
using the definition of the sequence $r^t_j$.  We have
$r^t_{j+1}=\frac{r^t_j-1}{(t+1)\cdot 2^{j-1}}$ for $j\geq1$. From this
definition, we get (by induction) that
\begin{align*}
  r^t_{j+1} & =r^t_1 \prod_{i=1}^j{\frac{1}{(t+1)\cdot
      2^{i-1}}}-\sum_{i=1}^{j-1}{\frac{1}{(t+1)\cdot
      2^{i}}}  \\
      &
      =\frac{r^t_1}{(t+1)^j}\cdot\frac{1}{2^{j(j-1)/2}}-\frac{1}{t+1}(1-\frac{1}{2^{j-1}})
      \\
      & < \frac{n}{(t+1)^j\cdot 2^{j(j-1)/2}},
\end{align*}
as $r^t_1< n$, and for $t\geq2$ $\frac{1}{t+1}(1-\frac{1}{2^{j-1}})>0$. On the
other hand,  $\frac{1}{(t+1)^{j-1}\cdot 2^{j(j-1)/2}}\leq 1$ holds, since $(t+1)^{j-1}\cdot2^{j(j-1)/2}\geq 1$
for $j\geq1$. So $r^t_{j+1}\geq \frac{n}{(t+1)^j\cdot 2^{j(j-1)/2}}-1$.
\end{proof}
The input set of items for $t\geq 2$ consists of multiple phases.
Phase 0 consists of the following sets of items; $nt$ items of size $\sigma_{01}=\frac{1}{t+1}+\Delta nt^2(t-1)+\Delta$, $t(t-1)n$ items of size $\sigma_{02}=\frac{1}{t+1}-\Delta nt(t-1)$, and pairs of items of sizes $\sigma^i_{03}=\frac{1}{t+1}+\Delta nt(t-1)+i\Delta$ and $\sigma^i_{04}=\frac{1}{t+1}-i\Delta$ for $1\leq i\leq t(t-1)n$, such that $\Delta=\frac{2\delta_0}{nt(t-1)+1}$. Note that $\sigma^i_{03}+\sigma^i_{04}=\frac{2}{t+1}+\Delta nt(t-1)$. There are also $(t-2)\cdot t(t-1)n$ items of size $\sigma_{05}=\frac{1}{t+1}$. For $1\leq j\leq s$, phase $j$ consists of the following $2d^t_j+r^t_j$ items. There are $r^t_{j}$ items of size $\sigma_j=\frac{1}{(t+1)\cdot 2^j}+2(d^t_j+1)\delta_j$, and for $1\leq i \leq d^t_j$, there are two items of sizes $\pi^i_{j}=\frac{1}{(t+1)\cdot 2^j}+(2i-1)\delta_j$ and $\theta^i_j=\frac{1}{(t+1)\cdot 2^j}-2i\delta_j$. Note that $\pi^i_{j}+\theta^i_j=\frac{1}{(t+1)\cdot 2^{j-1}}-\delta_j$.
A bin of level $j$ in the optimal packing contains only items of phases $1,\ldots, j$. A bin of level $s+1$ contains items of all phases.
The optimal packing contains $t(t-1)n$ bins of level 0, $d^t_j$ bins of level $j$, for $1 \leq j \leq s$, and the remaining bins are of level $s+1$.
Note that $\sum\limits_{j=1}^s d^t_j =\frac{t}{t+1}n+\sum\limits_{j=2}^s d^t_j=\frac{t}{t+1}n+r^t_1-r^t_s=\frac{t}{t+1}n+\frac{1}{t+1}n-r^t_s=n-r^t_s$. Thus, the number of level $s+1$ bins is (at most) $r^t_s$, and we have $n$ bins of levels $1 \leq j \leq s+1$ allocated, in addition to the $t(t-1)n$ bins of level 0. In total, the packing contains of at most $t(t-1)n+n=(t(t-1)+1)n$ bins. The optimal packing of the set of items specified above is defined as follows. A level 0 bin contains $t-2$ items of size $\sigma_{05}$, one item of size $\sigma_{02}$ and, in addition, one pair of items of sizes $\sigma^i_{03}$ and $\sigma^i_{04}$ for a given value of $i$ such that $1\leq i\leq t(t-1)n$.
For $1 \leq j \leq s$, a level $j$ bin contains $t$ items of size $\sigma_{01}$ and one item of each size $\sigma_k$ for $1\leq k \leq j-1$, and, also, one pair of items of sizes $\pi^i_{j}$ and $\theta^i_j$ for a given value of $i$ such that $1 \leq i \leq d^t_j$. A bin of level $s+1$ contains $t$ items of size $\sigma_{01}$ and one item of each size $\sigma_k$ for $1\leq k \leq s$.
\begin{claim}  \label{thm:23}
This set of items can be packed into $n+t(t-1)n$ bins, i.e., $OPT \leq (1+t(t-1))n$
\end{claim}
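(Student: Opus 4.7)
The plan is to verify, for each type of bin in the proposed optimal packing, that (i) the total size of items assigned to it is at most $1$, and (ii) across all bin types, every item in the construction is used exactly once. Once both checks pass, the bin count follows by short arithmetic using the recursion $d^t_{j+1}=r^t_j-r^t_{j+1}$ together with the base values $r^t_1=\frac{n}{t+1}$ and $d^t_1=\frac{t}{t+1}n$.

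For (i), a level-$0$ bin sums to exactly $1$: the pair $\sigma^i_{03}+\sigma^i_{04}=\frac{2}{t+1}+\Delta n t(t-1)$ cancels the $-\Delta n t(t-1)$ hidden in $\sigma_{02}$, and together with $(t-2)\sigma_{05}$ this equals $\frac{t+1}{t+1}=1$. For a level-$j$ bin with $1\leq j\leq s$, the nominal (leading) terms collapse via a geometric sum:
\[
t\cdot\frac{1}{t+1}+\sum_{k=1}^{j-1}\frac{1}{(t+1)2^{k}}+\frac{1}{(t+1)2^{j-1}}\;=\;1,
\]
so it suffices to control the $\delta$-perturbations. The positive perturbation coming from the $\sigma_{01}$'s is a polynomial in $t$ times $\delta_0$, the perturbations from the $\sigma_k$'s contribute at most $2(n+1)\sum_{k<j}\delta_k$, while the pair $\pi^i_j+\theta^i_j$ contributes a \emph{deficit} of $\delta_j$. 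The geometric decay $\delta_{k-1}=(4n)^{-2}\delta_k$ and the assumption that $n$ is huge compared to the other parameters immediately show that the total positive excess is swamped by the $-\delta_j$ deficit, so the bin size stays strictly below $1$. A level-$(s+1)$ bin is easier still: its nominal sum is $1-\frac{1}{(t+1)2^{s}}$, and the same estimates handle the perturbations.

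For (ii), I would match supply and demand class by class. The $t(t-1)n$ copies of $\sigma_{02}$, the $t(t-1)n$ pairs $(\sigma^i_{03},\sigma^i_{04})$, and the $(t-2)t(t-1)n$ copies of $\sigma_{05}$ are exactly consumed by the $t(t-1)n$ level-$0$ bins. The $nt$ copies of $\sigma_{01}$ are used $t$ at a time over the bins of levels $\geq 1$, and a quick calculation yields $\sum_{j=1}^{s}d^t_j+r^t_s=n$ such bins. For each fixed $j\in[1,s]$, the item $\sigma_j$ is placed in every bin whose level is at least $j+1$, so its demand is $\sum_{k=j+1}^{s}d^t_k+r^t_s$; the telescoping $d^t_k=r^t_{k-1}-r^t_k$ simplifies this to $r^t_j$, matching the phase-$j$ supply. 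Each pair $(\pi^i_j,\theta^i_j)$ is placed in a distinct level-$j$ bin, and both sides equal $d^t_j$.

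Summing bin counts gives
\[
t(t-1)n+\sum_{j=1}^{s}d^t_j+r^t_s\;=\;t(t-1)n+d^t_1+(r^t_1-r^t_s)+r^t_s\;=\;t(t-1)n+\frac{t}{t+1}n+\frac{1}{t+1}n\;=\;(1+t(t-1))n,
\]
which establishes $OPT\leq(1+t(t-1))n$. The main obstacle will be keeping the bookkeeping of the many $\delta$-perturbations straight across the different bin levels simultaneously; once the geometric spacing $\delta_{j-1}=(4n)^{-2}\delta_j$ is exploited, however, every perturbation inequality becomes routine.
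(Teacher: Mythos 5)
Your proposal follows essentially the same route as the paper's proof: verify that each bin type's load is at most $1$ (exact cancellation at level $0$, geometric collapse of the nominal terms plus $\delta$-bookkeeping at levels $1$ through $s+1$), match item supply to bin demand via the telescoping $d^t_k=r^t_{k-1}-r^t_k$, and count the bins. The estimates you sketch --- the $-\delta_j$ deficit from each $\pi^i_j+\theta^i_j$ pair absorbing the accumulated positive perturbations thanks to the spacing $\delta_{j-1}=(4n)^{-2}\delta_j$ --- are exactly the ones the paper carries out explicitly.
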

\begin{proof}
First, we show that every item was assigned into some bin.
Consider the $nt$ items of size $\sigma_{01}$. Each $t$-tuple of these items is assigned into a bin of level $1\leq j\leq s$ together.
Consider items of size $\pi^i_{j}$ and $\theta^i_j$. Such items exist for $1\leq i \leq d^t_j$, therefore, every such pair is assigned into a bin (of level $1\leq j\leq s$) together.
Next, consider items of size $\sigma_j$ for some $1\leq j \leq s$. The number of such items is $r^t_j$. The number of bins which received such items is $\sum\limits_{k=j+1}^{s} d^t_k+r^t_s=r^t_j$.
As to the items of size $\sigma_{02}$. There are $t(t-1)n$ such items, each item is assigned into one of the $t(t-1)n$
bins of level 0.
The items $\sigma^i_{03}$ and $\sigma^i_{04}$ that exist for $1\leq i\leq t(t-1)n$. Every such pair is assigned into one of the $t(t-1)n$ level 0 bins together.
And, finally consider the $(t-2)\cdot t(t-1)n$ items of size $\sigma_{05}$. Each $(t-2)$ tuple of these items is assigned into one of the $t(t-1)n$ level 0 bins.

We further show that the sum of sizes of items in each bin does not exceed 1. Consider a bin of level 0. The sum of items it contains is: $(t-2)\sigma_{05}+\sigma_{02}+\sigma^i_{03}+\sigma^i_{04}=(t-2)\cdot \frac{1}{t+1}+\frac{1}{t+1}-\Delta nt(t-1)+\frac{2}{t+1}+\Delta nt(t-1)=1$.
Now, consider a bin of level $j$ for some $1 \leq j \leq s$. The sum of items
packed in it is:
\begin{align*}
  t\cdot \sigma_{01}+ & \sum\limits_{k=1}^{j-1}\sigma_{k}+\frac{1}{(t+1)\cdot
  2^{j-1}}-\delta_j \\ &=t\cdot (\frac{1}{t+1}+\Delta
nt^2(t-1)+\Delta)+\sum\limits_{k=1}^{j-1} (\frac{1}{(t+1)\cdot
  2^k}+2(d^t_k+1)\delta_k)+\frac{1}{(t+1)\cdot
  2^{j-1}}-\delta_j  \\ & =\frac{t}{t+1}+t\cdot (\Delta
nt^2(t-1)+\Delta)+\frac{1}{(t+1)\cdot
  2^{j-1}}-\delta_j+\frac{1}{t+1}\sum\limits_{k=1}^{j-1}
(\frac{1}{2^k}+2(d^t_k+1)\delta_k) \\ & =\frac{t}{t+1}+\frac{1}{(t+1)\cdot
  2^{j-1}}+\frac{1-(\frac{1}{2})^{j-1}}{t+1}+t^2 \cdot
2\delta_0-\delta_j+\sum\limits_{k=1}^{j-1} 2(d^t_k+1)\delta_k\\ &=1+t^2 \cdot
2\delta_0+\sum\limits_{k=1}^{j-1} 2(d^t_k+1)\delta_k-\delta_j.
\end{align*}
It is left to show that $t^2\cdot 2\delta_0+\sum\limits_{k=1}^{j-1} 2(d^t_k+1)\delta_k-\delta_j\leq 0$ holds.
As $d_k+1 \leq n$ and $\delta_j$ is a strictly increasing sequence, we have $2(d_k+1)\delta_k \leq 2n\delta_{j-1}$, and since $j-1\leq s<n$, $\sum\limits_{k=1}^{j-1} 2(d_k+1)\delta_k < 4n^2\delta_{j-1}$. Also, as $t<n$, $t^2 \cdot 2\delta_0< 2n^2\delta_{j-1}$. Using $\delta_j=16n^2 \delta_{j-1}$ we get that the sum $t^2\cdot 2\delta_0+\sum\limits_{k=1}^{j-1} 2(d^t_k+1)\delta_k$ is smaller than $\delta_j$.

It is left to consider a bin of level $s+1$. The sum of items in it is:
\begin{align*}
t\cdot \sigma_{01}+ \sum\limits_{k=1}^s \sigma_k & = t\cdot (\frac{1}{t+1}+\Delta
nt^2(t-1)+\Delta)+\sum\limits_{k=1}^s (\frac{1}{(t+1)\cdot
  2^k}+2(d^t_k+1)\delta_k) \\  & =\frac{t}{t+1}+t\cdot (\Delta
nt^2(t-1)+\Delta)+\frac{1-(\frac{1}{2})^{s}}{(t+1)}+\sum\limits_{k=1}^{s}
2(d_k+1)\delta_k \\  & =1-\frac{(\frac{1}{2})^{s}}{(t+1)}+t\cdot
2\delta_0+\sum\limits_{k=1}^{s} 2(d^t_k+1)\delta_k.
\end{align*}
We have $2(d^t_k+1)\delta_k \leq 2n\delta_s= \frac{1}{2^{2s-1}n^{s-1}}$. Since
$1<s<n$, $t<n$ and $t\cdot 2\delta_0< 2n^2\delta_{s}$, we get that the
quantity above is at most
\begin{align*}
  1-\frac{(\frac{1}{2})^{s}}{(t+1)}+\frac{n}{2^{2s-1}n^{s-1}}+2n^2\delta_{s} &
  =1-\frac{1}{2^{s}(t+1)}+
\frac{1}{2^{2s-1}n^{s-2}}+2n^2\delta_{s} \\ & =1-\frac{1}{2^{s}(t+1)}+
\frac{1}{2^{2s-1}n^{s-2}}+\frac{2n^2}{(4n)^s} \\  &=1-\frac{1}{2^{s}(t+1)}+
\frac{1}{2^{2s-1}n^{s-2}}+\frac{1}{2^{2s-1}n^{s-2}} \\ & =1-\frac{1}{2^{s}(t+1)}+
\frac{1}{2^{2(s-1)}n^{s-2}}<1.
\end{align*}
\end{proof}

Before introducing the \textit{NE} packing for this set of items, we slightly modify the input by removing a small number of items. Clearly, $OPT \leq (1+t(t-1))n$ would still hold for the modified input.
The modification applied to the input is a removal of items $\pi^1_{j}$ and $\theta^{d^t_j}_j$ for all $1 \leq j \leq s$, the two items $\sigma^1_{03}$ and $\sigma^{t(t-1)n}_{04}$ and $(t-2)$ of the $\sigma_{05}$ items from the input.
We now define an alternative packing, which is a \textit{NE}. There are three types of bins in this packing.
The bins of the first type are bins with items of phase $j$, $1\leq j\leq s+1$.
We construct $r^t_j$ such bins. A bin of phase $j$ consists of $(t+1)\cdot 2^{j}-1$ items, as follows. One item of size $\sigma_j=\frac{1}{(t+1)\cdot 2^j}+2(d^t_j+1)\delta_j$, and $(t+1)\cdot 2^{j-1}-1$ pairs of items of phase $j$. A pair of items of phase $j$ is defined to be the items of sizes $\pi^{i+1}_{j}$ and $\theta^i_j$, for some $1 \leq i \leq d^t_j-1$. The sum of sizes of this pair of items is $\frac{1}{(t+1)\cdot 2^j}+(2i+1)\delta_j+\frac{1}{(t+1)\cdot 2^j}-2i\delta_j=\frac{2}{(t+1)\cdot 2^{j}}+\delta_j=\frac{1}{(t+1)\cdot 2^{j-1}}+\delta_j$.

Using $d^t_j=((t+1)\cdot 2^{j-1}-1)r^t_j+1$ we get that all phase $j$ items, for $1\leq j\leq s$ are packed. The sum of items in every such bin is $1-\frac{1}{(t+1)\cdot 2^{j-1}}+((t+1)\cdot 2^{j-1}-1)\delta_j+ \frac{1}{(t+1)\cdot 2^j}+2(d^t_j+1)\delta_j=1-\frac{1}{(t+1)\cdot 2^{j}}+\delta_j((t+1)\cdot 2^{j-1}+1+2d^t_j)$.

The $nt$ bins of the second type in the \textit{NE} packing contain $(t-1)$
items of size $\sigma_{02}=\frac{1}{t+1}-\Delta nt(t-1)$ and one item of size
$\sigma_{01}=\frac{1}{t+1}+\Delta nt^2(t-1)+\Delta$, from the 0 phase
bins. The load of each such bin is
\begin{align*}
  (t-1) \left(\frac{1}{t+1}-\Delta  nt(t-1)\right)+& \frac{1}{t+1}+\Delta
  nt^2(t-1)+\Delta \\ & =\frac{t}{t+1}-\Delta nt(t-1)^2+\Delta
  nt^2(t-1)+\Delta \\ & =\frac{t}{t+1}+\Delta
  nt(t-1)(t-(t-1))+\Delta \\ & =\frac{t}{t+1}+\Delta
  nt(t-1)+\Delta \\ & =\frac{t}{t+1}+\Delta(nt(t-1)+1)\\ & =\frac{t}{t+1}+2\delta_0,
\end{align*}
by definition of $\Delta$.
As there are in total $t(t-1)n$ identical items of size $\sigma_{02}$ and $nt$ identical $\sigma_{01}$ items in the input set, we get that all these items are packed in these $nt$ second type bins in the \textit{NE} packing constructed above.

The $t(t-1)n-1$ bins of third type in the \textit{NE} packing each contain $(t-2)$ items of size $\sigma_{05}=\frac{1}{t+1}$, and, in addition, one pair of items of sizes $\sigma^{i+1}_{03}$ and $\sigma^i_{04}$, for some $1\leq i\leq t(t-1)n$ from the phase 0 bins. The sum of sizes of this pair of items is:
$\sigma^{i+1}_{03}+\sigma^i_{04}=\frac{1}{t+1}+\Delta nt(t-1)+(i+1)\Delta+\frac{1}{t+1}-i\Delta=\frac{2}{t+1}+\Delta(nt(t-1)+1)=\frac{2}{t+1}+2\delta_0$.
Thus, the total load of such bin is $(t-2)\cdot \frac{1}{t+1}+\frac{2}{t+1}+2\delta_0=\frac{t}{t+1}+2\delta_0$, which equals the load of the bins of the second type in the \textit{NE} packing.
As there are in total $((t-2)\cdot t(t-1)n-(t-2))=(t-2)(t(t-1)n-1)$ items of size $\sigma_{05}$ and $t(t-1)n-1$ pairs of $\sigma^i_{03}$ and $\sigma^i_{04}$ items, we conclude that all the items of size $\sigma_{05}$ and $\sigma^i_{03}$, $\sigma^i_{04}$ are packed in these $t(t-1)n-1$ \textit{NE} bins of the third type, as defined above.

We now should verify that the sum of sizes of the items packed in the three types of bins in the defined \textit{NE} packing does not exceed 1.
This holds for the second and the third type bins, as:  $\frac{t}{t+1}+2\delta_0<\frac{t}{t+1}+2n\delta_s=\frac{t}{t+1}+\frac{2n}{(4n)^s}=\frac{t}{t+1}+
\frac{1}{2^{2s-1}n^{s-1}}< \frac{t}{t+1}+\frac{1}{t+1}=1$.
For the bins of the first type, this property directly follows from the inequality proven in the next claim.

\begin{claim}  \label{thm:24}
The loads of the bins in the packing defined above are monotonically increasing as a function of the phase.
\end{claim}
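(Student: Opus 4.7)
The plan is to write out the load of a bin of each phase explicitly, using the computations already carried out in the construction, and then verify that the difference of consecutive loads is strictly positive. Let $L_j$ denote the load of a phase-$j$ bin in the \emph{NE} packing. The computations performed above give
\begin{align*}
  L_0 &= \tfrac{t}{t+1} + 2\delta_0, \\
  L_j &= 1 - \tfrac{1}{(t+1)\,2^{j}} + \delta_j\,E_j \qquad (1 \le j \le s),
\end{align*}
where I abbreviate $E_j := (t+1)\,2^{j-1}+1+2\,d^t_j$. The claim reduces to $L_{j+1} - L_j > 0$ for every $0 \le j \le s-1$.

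I would handle the boundary $j=0$ separately. Direct subtraction gives
\[ L_1 - L_0 \;=\; \tfrac{1}{2(t+1)} + \delta_1 E_1 - 2\delta_0. \]
Since $\delta_1 = (4n)^2\,\delta_0$, the term $\delta_1 E_1$ already swamps $2\delta_0$; even ignoring it, the standalone positive constant $\tfrac{1}{2(t+1)}$ overwhelms $2\delta_0$ for any reasonable $n$. Hence $L_1 > L_0$.

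For $1 \le j \le s-1$ I would rewrite
\[ L_{j+1}-L_j \;=\; \tfrac{1}{(t+1)\,2^{j+1}} \;+\; \delta_{j+1}\,E_{j+1} \;-\; \delta_j\,E_j, \]
so that the leading gap $\tfrac{1}{(t+1)\,2^{j+1}}$ is positive and at least $\tfrac{1}{(t+1)\,2^{s}}$. To bound the remaining $-\delta_j E_j$ term, I use $d^t_j \le n$ and $(t+1)\,2^{j-1}\le (t+1)\,2^{s-1} \le n$ (which is true because the construction's assumption $n > 2^{s^{3}}$ makes $n$ much larger than $(t+1)\,2^s$), obtaining $E_j \le 3n$ and consequently
\[ \delta_j E_j \;\le\; 3n\cdot (4n)^{-(3s-2j)} \;\le\; 3n\cdot (4n)^{-(s+2)}, \]
which is utterly negligible compared with $\tfrac{1}{(t+1)\,2^s}$. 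Since $\delta_{j+1}E_{j+1}\ge 0$ only helps, we conclude $L_{j+1}>L_j$.

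The main obstacle is purely bookkeeping: one must verify that the perturbations $\delta_j E_j$, introduced in the construction only to make item sizes generic and to guarantee Nash stability, never conspire to overpower the geometric gap $\tfrac{1}{(t+1)\,2^{j+1}}$ coming from the dominant $1-\tfrac{1}{(t+1)\,2^{j}}$ term. This is exactly the role played by the schedule $\delta_j=(4n)^{-(3s-2j)}$ together with $n>2^{s^{3}}$: every perturbation sits at a scale doubly-exponentially smaller than the gap it could possibly threaten, so strict monotonicity survives. An immediate corollary, which is the application needed by the previous claim, is that every first-type bin satisfies $L_j \le L_s = 1 - \tfrac{1}{(t+1)\,2^{s}} + \delta_s E_s < 1$, confirming feasibility of the \emph{NE} packing.
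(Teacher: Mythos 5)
Your proof is correct and follows essentially the same route as the paper's: both reduce the claim to showing that the perturbation term $\delta_j\bigl((t+1)2^{j-1}+1+2d^t_j\bigr)$ is dominated by the geometric gap $\tfrac{1}{(t+1)2^{j+1}}$, using $d^t_j\le n$, $n>2^{s^3}$ and the schedule $\delta_j=(4n)^{-(3s-2j)}$, and both handle the phase-$0$ comparison separately. The only cosmetic difference is that the paper proves the slightly stronger bound $L_j<1-\tfrac{1}{(t+1)2^{j+1}}$ (which yields feasibility $L_j<1$ at the same stroke), whereas you obtain feasibility as a corollary via $L_j\le L_s<1$; this changes nothing of substance.
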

\begin{proof}
It is enough to show $1-\frac{1}{(t+1)\cdot 2^{j}}+\delta_j((t+1)\cdot 2^{j-1}+1+2d^t_j)<1-\frac{1}{(t+1)\cdot 2^{j+1}}$ for $1\leq j\leq s$, $t\geq 2$ which is equivalent to proving $\delta_j((t+1)\cdot 2^{j-1}+1+2d^t_j)2^{j+1}<\frac{1}{t+1}$.
Using $d^t_j<n$, we have: $\delta_j((t+1)\cdot 2^{j-1}+1+2d^t_j)2^{j+1}<\delta_j((t+1)\cdot 2^{2j}+2^{j+2}n)<(t+1)\cdot 2\delta_j n^2$, as $n>2^{s^3}$. Using $\delta_j \leq \delta_s
=\frac{1}{2^{2s}n^s}\leq \frac{1}{16n^3(t+1)^2}$ we get $2\delta_j n^2<\frac{1}{t+1}$.

For $j=0$, $\frac{t}{t+1}+2\delta_0<1-\frac{1}{(t+1)\cdot 2^{j}}+\delta_j((t+1)\cdot 2^{j-1}+1+2d^t_j)$ holds for all $j\geq 1$, as $2\delta_0\leq \delta_j((t+1)\cdot 2^{j-1}+1+2d^t_j)$, since $t\geq 2$ and $\delta_j$ is a strictly increasing sequence.
\end{proof}

\begin{claim}  \label{thm:25}
The packing defined above is a valid NE packing.
\end{claim}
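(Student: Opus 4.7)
The plan is to verify the Nash-equilibrium condition by showing, for every item $i$ in a bin $B_1$ of total load $L_1$ and every other bin $B_2$ of current load $L_2$, that either the move fails to fit ($L_2+s_i>1$) or it fails to strictly improve $i$'s cost $s_i/L_1$ (that is, $L_2+s_i\le L_1$). The bins partition into \emph{phase~$0$} (the $nt$ second-type and $t(t-1)n-1$ third-type bins, all of load $L_0=\frac{t}{t+1}+2\delta_0$) and, for $1\le j\le s$, \emph{phase~$j$} (the $r^t_j$ first-type bins of load $L_j=1-\frac{1}{(t+1)2^j}+\delta_j\bigl((t+1)2^{j-1}+1+2d^t_j\bigr)$); Claim~\ref{thm:24} orders these loads strictly by $j$. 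I would split the analysis by the phase $j$ of the source and $j'$ of the target.

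When $j'>j$, the improvement condition is automatic ($L_{j'}>L_j$), so I must show $L_{j'}+s_i>1$. The smallest item in phase~$0$ is $\sigma_{02}>\frac{1}{t+1}-2\delta_0$, and the smallest in phase~$j\ge 1$ is $\theta^{d^t_j-1}_j=\frac{1}{(t+1)2^j}-2(d^t_j-1)\delta_j$; combined with $L_{j'}\ge 1-\frac{1}{(t+1)2^{j'}}$, the sum exceeds~$1$ by at least $\frac{1}{t+1}(2^{-j}-2^{-j'})-O(n\delta_j)\ge\frac{1}{(t+1)2^{j+1}}-O(n\delta_j)$, where the leading gap dominates the $\delta$-correction because $n>2^{s^3}$ and $\delta_j\le\delta_s=(4n)^{-s}$.

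When $j'=j$, the non-improvement condition reduces to $s_i\le 0$ (impossible), so I must again show $L_j+s_i>1$. For $j\ge 1$, every present item satisfies $L_j+s_i\ge L_j+\theta^{d^t_j-1}_j=1+\delta_j\bigl((t+1)2^{j-1}+3\bigr)>1$; for $j=0$, the smallest present item $\sigma_{02}$ gives $L_0+\sigma_{02}=1+2\delta_0-nt(t-1)\Delta>1$ exactly because the choice $\Delta=\frac{2\delta_0}{nt(t-1)+1}$ forces $nt(t-1)\Delta<2\delta_0$, and an analogous estimate (using $i\Delta<2\delta_0$ for $1\le i\le t(t-1)n-1$) handles each $\sigma^i_{04}$, while every larger phase-$0$ item satisfies $L_0+s_i\ge 1+2\delta_0>1$ directly.

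When $j'<j$ the move always fits, so I must verify $L_{j'}+s_i\le L_j$. For $j'\le j-2$ the leading terms alone give a margin of order $\frac{1}{(t+1)2^j}$, which absorbs every $\delta$-perturbation. The delicate case is $j'=j-1$ (including the phase-$1$ to phase-$0$ boundary), where the leading terms cancel exactly and the comparison is decided purely by the $\delta$-corrections: the correction in $L_{j-1}+s_i$ is of order $n\delta_{j-1}$ while that in $L_j$ is of order $n\delta_j$, and the geometric separation $\delta_{j-1}=\delta_j/(4n)^2$ (together with $\delta_0=\delta_1/(4n)^2$ at the $0$--$1$ boundary) ensures the required strict inequality. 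This case is the main obstacle: the entire role of the carefully tuned sequence $\delta_0,\ldots,\delta_s$ and the coupling $\Delta=\frac{2\delta_0}{nt(t-1)+1}$ is precisely to make every one of these borderline inequalities hold with room to spare.
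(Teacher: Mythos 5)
Your overall strategy coincides with the paper's: use the load monotonicity of Claim~\ref{thm:24} to order the bins by phase, and for each source/target phase pair verify either ``does not fit'' or ``does not improve,'' checking only the extremal item each time. Your case analysis is complete and careful in most respects --- you account for the removed items (so the smallest surviving phase-$j$ item is indeed $\theta^{d^t_j-1}_j$), you identify $\sigma_j$ as the largest phase-$j$ item, and the cases $j'\ge j$ and $j'\le j-2$ are handled correctly by the leading-term margins you describe.

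However, your justification of the decisive case $j'=j-1$ contains a genuine error. You assert that the $\delta$-correction in $L_{j-1}+s_i$ is of order $n\delta_{j-1}$ while that in $L_j$ is of order $n\delta_j$, and let the geometric separation of the $\delta$'s decide. But the worst migrating item is $\sigma_j=\frac{1}{(t+1)2^j}+2(d^t_j+1)\delta_j$, whose own perturbation $2(d^t_j+1)\delta_j$ is of order $d^t_j\,\delta_j$ --- the \emph{same} order as the correction $\delta_j\bigl((t+1)2^{j-1}+1+2d^t_j\bigr)$ appearing in $L_j$. A bare comparison of $n\delta_{j-1}$ against $n\delta_j$ therefore does not close this case; taken literally, that step fails. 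What actually saves the inequality $L_{j-1}+\sigma_j\le L_j$ is an exact cancellation of the $2d^t_j\delta_j$ terms on the two sides, which leaves
\[
L_j-\bigl(L_{j-1}+\sigma_j\bigr)=\delta_j\bigl((t+1)2^{j-1}-1\bigr)-\delta_{j-1}\bigl((t+1)2^{j-2}+1+2d^t_{j-1}\bigr),
\]
and only at this point does the separation $\delta_j=16n^2\delta_{j-1}$, together with the crude bound $(t+1)2^{j-2}+1+2d^t_{j-1}<4n(t+1)$, make the right-hand side positive; this is exactly the computation in the paper. The same remark applies at the $0$--$1$ boundary, where after the analogous cancellation the residue is $t\delta_1-2\delta_0>0$. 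So the mechanism you name (the tuned $\delta$ hierarchy and the choice of $\Delta$) is indeed what makes the construction work, but the borderline inequalities are decided by cancellation plus a margin of order $2^{j}\delta_j$, not by the order-of-magnitude comparison you state.
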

\begin{proof}
To show that this is a \textit{NE} packing, we need to show the an item of phase $j>0$ cannot migrate to a bin of a level $k \geq j$, since this would result in a load larger than 1, and that it cannot migrate to a bin of phase $k<j$, since this would result in a load smaller than the load of a phase $j$ bin. Due to the monotonicity we proved in Claim \ref{thm:24}, we only need to consider a possible migration of a phase $j$ item into a phase $j$ bin, and a phase $j-1$ bin, if such bins exist. Moreover, in the first case it is enough to consider the minimum size item and in the second case, the maximum size item of phase $j$.

For phase 0 items, since the smallest phase 0 item has size $\frac{1}{t+1}-\Delta nt(t-1)$, if it migrates to another bin of this phase, we get a total load of $\frac{t}{t+1}+\Delta(nt(t-1)+1)+\frac{1}{t+1}-\Delta nt(t-1)=1+\Delta>1$, as $\Delta>0$.

For items of phase $j\geq 1$:
The smallest phase $j$ item has size $\frac{1}{(t+1)\cdot2^j}-\delta_j(2(d^t_j-1))= \frac{1}{(t+1)\cdot2^j}-\delta_j(2d^t_j-2)$. If it migrates to another bin
of this phase, we get a total load of
\begin{align*} 1-\frac{1}{(t+1)\cdot 2^{j}} +\delta_j( & (t+1)\cdot
    2^{j-1}+1+2d^t_j)+ \frac{1}{(t+1)\cdot2^j}-\delta_j(2d^t_j-2) \\ &
  =1+\delta_j((t+1)\cdot 2^{j-1}+1+2d^t_j)-2d^t_j\delta_j+2\delta_j \\ &
  =1+\delta_j(3+(t+1)\cdot2^{j-1})>1.
\end{align*}

The check for the largest item in the phase should be done separately for cases $j=1$ and $j\geq2$, because we want to show that the largest item of phase $j=1$ (in first type bin) cannot migrate into a phase 0 bin (a second or third type bin), while for the largest item of phase $j\geq2$ we need to show that it cannot move into other bin of first type.
For phase $j=1$: The largest phase item has size $\frac{1}{2(t+1)}+2(d^t_1+1)\delta_1$. If it migrates to a bin of phase 0, we get a load of $\frac{t}{t+1}+2\delta_0+\frac{1}{2(t+1)}+2(d^t_1+1)\delta_1=\frac{2t+1}{2(t+1)}+2\delta_0+
2(d^t_1+1)\delta_1$. This load is strictly smaller than a load of level $1$ which is $1-\frac{1}{(t+1)\cdot2}+\delta_1((t+1)+1+2d^t_1)=\frac{2t+1}{2(t+1)}+\delta_1((t+1)+1+2d^t_1)$, as $t\geq2$ and $\delta_1>\delta_0$.

For phase $j\geq2$: The largest phase $j$ item has size
$\frac{1}{(t+1)\cdot2^j}+2(d^t_j+1)\delta_j$. If it migrates to a bin of phase
$j-1$, we get a load of
\begin{align*}
  1-\frac{1}{(t+1)\cdot 2^{j-1}}+\delta_{j-1} & ((t+1)\cdot
  2^{j-2}+1+2d^t_{j-1})+\frac{1}{(t+1)\cdot
    2^j}+2(d^t_j+1)\delta_j \\ & =1-\frac{1}{(t+1)\cdot 2^j}+\delta_{j-1}((t+1)\cdot
  2^{j-2}+1+2d^t_{j-1})+2(d^t_j+1)\delta_j\\ &=1-\frac{1}{(t+1)\cdot
    2^j}+\delta_{j-1}((t+1)\cdot
  2^{j-2}+1+2d^t_{j-1})+2d^t_j\delta_j+2\delta_j.
\end{align*}
We compare this load with $1-\frac{1}{(t+1)\cdot2^j}+\delta_j((t+1)\cdot 2^{j-1}+1+2d^t_j)$, and prove that the first load is smaller. Indeed $\delta_{j-1}((t+1)\cdot 2^{j-2}+1+2d^t_{j-1}) < \delta_j((t+1)\cdot2^{j-1}-1)$ since $\delta_{j}=16n^2 \delta_{j-1}$, $n>2^{s^3}$ and $((t+1)\cdot 2^{j-2}+1+2d^t_{j-1}) <4n(t+1) < 16n^2 ((t+1)\cdot 2^{j-1}-1)$.
\end{proof}

Finally, we bound the \textit{PoA} as follows. The cost of the resulting \textit{NE} packing is $nt+t(t-1)n-1+\sum\limits_{j=1}^s r^t_j=t^2n-1+\sum\limits_{j=1}^s r^t_j$. Using Observation \ref{thm:21} we get that $\sum\limits_{j=1}^s r^t_j \geq \sum\limits_{j=1}^s (\frac{n}{(t+1)^j\cdot 2^{j(j-1)/2}}-1 ) $ and since $OPT=t(t-1)\cdot n +n$ and $n>>s$, we get a ratio of at least \[\frac{t^2+\sum_{j=1}^s {(t+1)^{-j}\cdot 2^{-j(j-1)/2}}}{t(t-1)+1}.\] Letting $s$ tend to infinity as well results in the claimed lower bound.

Note that we assume that all numbers $r^t_j$ and $d^t_j$ are integer values for each $t\geq 2$, which is not necessarily the case. To overcome this, we let $r^t_{j+1}=\big \lfloor \frac{r^t_j-1}{(t+1)\cdot 2^{j-1}} \big \rfloor$, for $0\leq j \leq s-1$, and $d^t_{j+1}=((t+1)\cdot2^{j-1}-1)r^t_{j+1}+1$. In this case, it is possible to prove $\frac{n}{(t+1)^j\cdot 2^{j(j-1)/2}}-3 \leq  r^t_j \leq \frac{n}{(t+1)^j\cdot 2^{j(j-1)/2}}$, which leads to the same result.

\subsection{Proof of Claim \ref{thm:40}}

Consider the well-known First Fit algorithm (FF for short) for bin packing. FF packs each item in turn into the lowest indexed bin to where it fits. It opens a new bin only in the case where the item does not fit into any existing bin. It was shown in \cite{JohnsonDUGG74} that any bin (accept for maybe two) in the packing produced by FF is more than $\frac{t}{t+1}$ full for any $t\geq 2$. For each $N_G$ instance, it is possible to define (modulo reordering the items) an instance for which running the FF algorithm will produce exactly the packing $b$. So, as any \textit{NE} packing $b$ can be produced by a run of FF, it has all the properties of a FF packing, including the one mentioned above.

\subsection{Proof of Claim \ref{thm:41}}

First, consider the bins in group $\mathcal{D}$. For $t\geq 3$, as all bins in $\mathcal{D}$ are filled by no more than $\frac{t^2-t+1}{t^2}$, no bin in this group (except maybe the leftmost bin) contains an item of size in $(0,\frac{t-1}{t^2}]$, as such an item will reduce its cost by moving to the leftmost bin in $\mathcal{D}$ (which is the bin with the largest load in $\mathcal{D}$), contradicting the fact that $b$ is an $\textit{NE}$.
Hence, all the items in bins (except for maybe one) in group $\mathcal{D}$  have items of sizes in $(\frac
{t-1}{t^2}, \frac{1}{t}]$. For $t=2$, as all bins in $\mathcal{D}$ are filled by no more than $\frac{17}{24}$, no bin in this group (except maybe the leftmost bin) contains an item of size in $(0,\frac{7}{24}]$, as such an item will reduce its cost by moving to the leftmost bin in $\mathcal{D}$, which contradicts the fact that $b$ is an $\textit{NE}$. Hence, all the items in bins (except for maybe one) in group $\mathcal{D}$  have items of sizes in $(\frac
{7}{24}, \frac{1}{2}]$.

Now, consider the bins in group $\mathcal{C}$. For $t\geq 3$, as all bins in $\mathcal{C}$ are filled by no more than
$\frac{t+1}{t+2}$, no bin in this group (except maybe the leftmost bin) contains an item of size in $(0,\frac{1}{t+2}]$, as such an item will reduce its cost by moving to the leftmost bin in $\mathcal{C}$ (which is the bin with the largest load in $\mathcal{C}$), contradicting the fact that $b$ is an $\textit{NE}$. Also, no bin in $\mathcal{C}$ contains an item of size $x\in(\frac{1}{t+2}, \frac{t-1}{t^2}]$, as such an item will benefit from moving to a bin in group $\mathcal{D}$, as $x+\frac{t}{t+1}>\frac{t+1}{t+2}$ for any $x>\frac{1}{(t+2)}$. Hence, all the items in bins in group $\mathcal{C}$  are of sizes in $(\frac{t-1}{t^2}, \frac{1}{t}]$. For $t=2$, as all bins in $\mathcal{C}$ are filled by no more than $\frac{3}{4}$, no bin in this group (except maybe the leftmost bin) contains an item of size in $(0,\frac{1}{4}]$, as such an item will reduce its cost by moving to the leftmost bin in $\mathcal{D}$, which contradicts the fact that $b$ is an $\textit{NE}$. Also, no bin in $\mathcal{C}$ contains an item of size $x\in(\frac{1}{4}, \frac{7}{24}]$, as such an item will benefit from moving to a bin in group $\mathcal{D}$, as $x+\frac{2}{3}>\frac{3}{4}$ for any $x>\frac{1}{4}$. Hence, all the items in bins (except for maybe one) in group $\mathcal{C}$  have sizes in $(\frac{7}{24}, \frac{1}{2}]$.

Finally, consider the bins in group $\mathcal{B}$. For $t\geq 3$, as all bins in $\mathcal{B}$ are filled by no more than $\frac{2t+1}{2(t+1)}$, no bin in this group (except maybe the leftmost bin) contains an item of size in $(0,\frac{1}{2(t+1)}]$, as such an item will reduce its cost by moving to the leftmost bin in $\mathcal{B}$ (which is the bin with the largest load in $\mathcal{B}$), contradicting the fact that $b$ is an $\textit{NE}$. Also, no bin in $\mathcal{B}$ contains an item of size $x\in(\frac{1}{2(t+1)}, \frac{t-1}{t^2}]$, as such an item will benefit from moving to a bin in group $\mathcal{D}$, as $x+\frac{t}{t+1}>\frac{2t+1}{2(t+1)}$ for any $x>\frac{1}{2(t+1)}$. Hence, all the items in bins (except for maybe one) in group $\mathcal{C}$  have items of sizes in $(\frac{t-1}{t^2}, \frac{1}{t}]$. For $t=2$, as all bins in $\mathcal{B}$ are filled by no more than $\frac{5}{6}$, no bin in this group (except maybe the leftmost bin) contains an item of size in $(0,\frac{1}{6}]$, as such an item will reduce its cost by moving to the leftmost bin in $\mathcal{B}$, which contradicts the fact that $b$ is an $\textit{NE}$. Also, no bin in $\mathcal{B}$ (except maybe the leftmost bin) contains an item of size $x\in(\frac{1}{6}, \frac{7}{24}]$, as such an item will benefit from moving to a bin in group $\mathcal{D}$, as $x+\frac{2}{3}>\frac{5}{6}$ for any $x>\frac{1}{6}$. Hence, all the items in bins (except for maybe one) in group $\mathcal{C}$ have sizes in $(\frac{7}{24}, \frac{1}{2}]$.

We conclude, that any bin in groups $\mathcal{B}$, $\mathcal{C}$ and $\mathcal{D}$, except for maybe a constant number of bins, contain only items of sizes in $(\frac{t-1}{t^2}, \frac{1}{t}]$ for $t\geq 3$, and items of sizes in $(\frac{7}{24}, \frac{1}{2}]$ for $t=2$.

Now, we show that each one of these bins contains exactly $t$ such items.
Note, that by definition of the groups all bins in $\mathcal{B}$, $\mathcal{C}$ and $\mathcal{D}$ (except maybe two)
have loads in $(\frac{t}{t+1},\frac{2t+1}{2(t+1)}]$ for $t\geq 3$, or in $(\frac 2 3, \frac 5 6]$ for $t=2$.

If a bin contains at most $t-1$ such items, then it has a load of at most $(t-1)\cdot \frac{1}{t}=\frac{t-1}{t}$ for $t\geq 3$ of at most $\frac{7}{24}$ for $t=2$, which is less than the assumed load in these bins, so they must have more than $(t-1)$ such items.

If a bin contains at least $t+1$ such items, then it has a load of at least $(t+1)\cdot \frac{t-1}{t^2}=1-\frac{1}{t^2}$, which is greater than $\frac{2t+1}{2(t+1)}$ for $t\geq 3$, or at least $\frac{7}{8}$
which is greater than $\frac 5 6$ for $t=2$, so they must have less than $(t+1)$ such items.

We conclude that each  bin in groups $\mathcal{B}$, $\mathcal{C}$ and $\mathcal{D}$), except for maybe 5 special bins (the leftmost bins in groups $\mathcal{B}$, $\mathcal{C}$ and $\mathcal{D}$ and the two rightmost bins in $\mathcal{D}$) contain exactly $t$ items with sizes in $(\frac{t-1}{t^2}, \frac{1}{t}]$ for $t\geq 3$, or exactly $2$ items of sizes in $(\frac{7}{24}, \frac{1}{2}]$ for $t=2$.

\subsection{Proof of Claim \ref{thm:50}}

Assume by contradiction that $(N_t+k)\cdot t$  of these items for $k\geq 0$ are packed together in $t$-tuples in bins of groups $\mathcal{B}$, $\mathcal{C}$ and $\mathcal{D}$ in the $NE$ packing. Consider the first $N_t$ such bins.
Call them $B_1, B_2,\ldots, B_{N_t}$. In a slight abuse of notation, we use $B_i$ to indicate both the $i$-th bin and its load. Denote the sizes of the remaining $N_t$ $t$-items by $t_1, t_2,\ldots,t_{N_t}$. These items are also packed in bins of groups $\mathcal{B}$, $\mathcal{C}$ and $\mathcal{D}$ in $b$, and share their bin with  $t-1$ $t$-items (when at least one of these items is not packed in any of the aforementioned $N_t$ bins in the optimal packing). Obviously, as all these $N_t\cdot (t+1)$ $t$-items fit into $N_t$ unit-capacity bins, $t_1+\ldots+ t_{N_t}+B_1+\ldots +B_{N_t} \leq N_t$ holds.
To derive a contradiction, we use the following observation:
\begin{observation}
A $t$-tuple of items with sizes in $(\frac{t-1}{t^2}, \frac 1 t]$ always has a greater total size than any $(t-1)$-tuple of such items.
\end{observation}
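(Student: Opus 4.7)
\begin{proof1}[Proof proposal]
The plan is to bound the minimum possible total size of a $t$-tuple from below and the maximum possible total size of a $(t-1)$-tuple from above, and to observe that the former strictly exceeds the latter.

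First I would note that every item in consideration has size strictly greater than $\frac{t-1}{t^{2}}$. Therefore any $t$ such items have total size strictly greater than $t\cdot\frac{t-1}{t^{2}}=\frac{t-1}{t}$, and this strict inequality is the key point. Next, every item has size at most $\frac{1}{t}$, so any $t-1$ such items have total size at most $(t-1)\cdot\frac{1}{t}=\frac{t-1}{t}$. Chaining the two bounds yields that the sum of any $t$-tuple is strictly greater than $\frac{t-1}{t}$, which in turn is at least the sum of any $(t-1)$-tuple, proving the observation.

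There is no real obstacle here; the only thing to be careful about is the strictness. The lower bound on a $t$-tuple is strict because the size interval is open on the left at $\frac{t-1}{t^{2}}$, while the upper bound on a $(t-1)$-tuple is non-strict because the interval is closed on the right at $\frac{1}{t}$. The strict inequality on the $t$-tuple side is exactly what gives the overall strict comparison, so this is the one line of the argument that must be stated explicitly.
\end{proof1}
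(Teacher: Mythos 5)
Your proof is correct and is essentially identical to the paper's: both bound a $t$-tuple strictly below by $t\cdot\frac{t-1}{t^2}=\frac{t-1}{t}$ and a $(t-1)$-tuple above by $(t-1)\cdot\frac{1}{t}=\frac{t-1}{t}$, with the strictness coming from the open left endpoint of the size interval. No differences worth noting.
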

\begin{proof}
The total size of any $(t-1)$ items with sizes in $(\frac{t-1}{t^2}, \frac 1 t]$ is at most $\frac{t-1}{t}$, while the total size of any $t$ items with sizes in $(\frac{t-1}{t^2}, \frac 1 t]$ is strictly greater than  $\frac{t(t-1)}{t^2}=\frac{t-1}{t}$.
\end{proof}
Thus, any item $t_i$, $1\leq i\leq N_t$ would be better off sharing a bin with other $t$ items of size in $(\frac{t-1}{t^2}, \frac 1 t]$ instead of just $t-1$ such items as it does in the $NE$ packing $b$. For an item which shares a bin with $t-1$ $t$-items we conclude that the only reason it does not move to another bin with $t$ such items in $b$ is that it does not fit there.

So, we know that no item $t_1$, $1\leq i\leq N_t$ fits in any of the bins $B_1, B_2,\ldots, B_{N_t}$ in $b$.
We get that for any $1\leq i\leq N_t$, for any $1\leq j\leq N_t$, the inequality $t_i+B_j>1$ holds. Summing these inequalities over all $1\leq i\leq N_t$ and $1\leq j\leq N_t$ we get $t_1+\ldots+ t_{N_t}+B_1+\ldots +B_{N_t} > N_t$, which is a contradiction.

\end{document}